\title{Pachinko}
\author{%
  Hugo A. Akitaya
\and
  Erik D. Demaine
\and
  Martin L. Demaine
\and
  Adam Hesterberg
\and
  Ferran Hurtado
\and
  Jason S. Ku
\and
  Jayson Lynch
}
\date{}
\newif\ifabstract
\newif\iffull
\makeatletter \hypersetup{pdftitle={\@title}}}
 \gdef\xxxmark{%
   \expandafter\ifx\csname @mpargs\endcsname\relax 
     \expandafter\ifx\csname @captype\endcsname\relax 
       \marginpar{xxx}
     \else
       xxx 
     \fi
   \else
     xxx 
   \fi}
 \gdef\xxx{\@ifnextchar[\xxx@lab\xxx@nolab}
 \long\gdef\xxx@lab[#1]#2{\textbf{[\xxxmark #2 ---{\sc #1}]}}
 \long\gdef\xxx@nolab#1{\textbf{[\xxxmark #1]}}
\gdef\fps@figure{!htbp}}
\let\realbfseries=\bfseries
\def\bfseries{\realbfseries\boldmath}
\newtheorem{theorem}{Theorem}
\newtheorem{lemma}[theorem]{Lemma}
\newtheorem{open}{Open Problem}
\newenvironment{proof}{\noindent\textbf{Proof: }\ignorespaces}
  {\hspace*{\fill}$\Box$\medskip}
\let\epsilon=\varepsilon
\begin{document}
\maketitle

\begin{abstract}
  Inspired by the Japanese game Pachinko, we study simple 
  (perfectly ``inelastic'' collisions) 
  dynamics of a unit ball falling amidst point obstacles
  (\emph{pins}) in the plane.
  A classic example is that a checkerboard grid of pins produces the binomial
  distribution, but what probability distributions result
  from different pin placements?
  In the 50-50 model, where the pins form a subset of this grid,
  not all probability distributions are possible, but
  surprisingly the uniform distribution is possible for $\{1,2,4,8,16\}$
  possible drop locations.
  Furthermore, every probability distribution can be approximated arbitrarily
  closely, and every dyadic probability distribution can be divided by a
  suitable power of $2$ and then constructed exactly
  (along with extra ``junk'' outputs).
  In a more general model, if a ball hits a pin off center, it falls left or
  right accordingly.  Then we prove a universality result:
  any distribution of $n$ dyadic probabilities, each specified by $k$ bits,
  can be constructed using $O(n k^2)$ pins,
  which is close to the information-theoretic lower bound of $\Omega(n k)$.
\end{abstract}

\begin{center}
\sl In memory of our friend Ferran Hurtado. 
\end{center}


\section{Introduction}




\begin{figure}[b]
  \centering
  \includegraphics[width=0.66\linewidth]{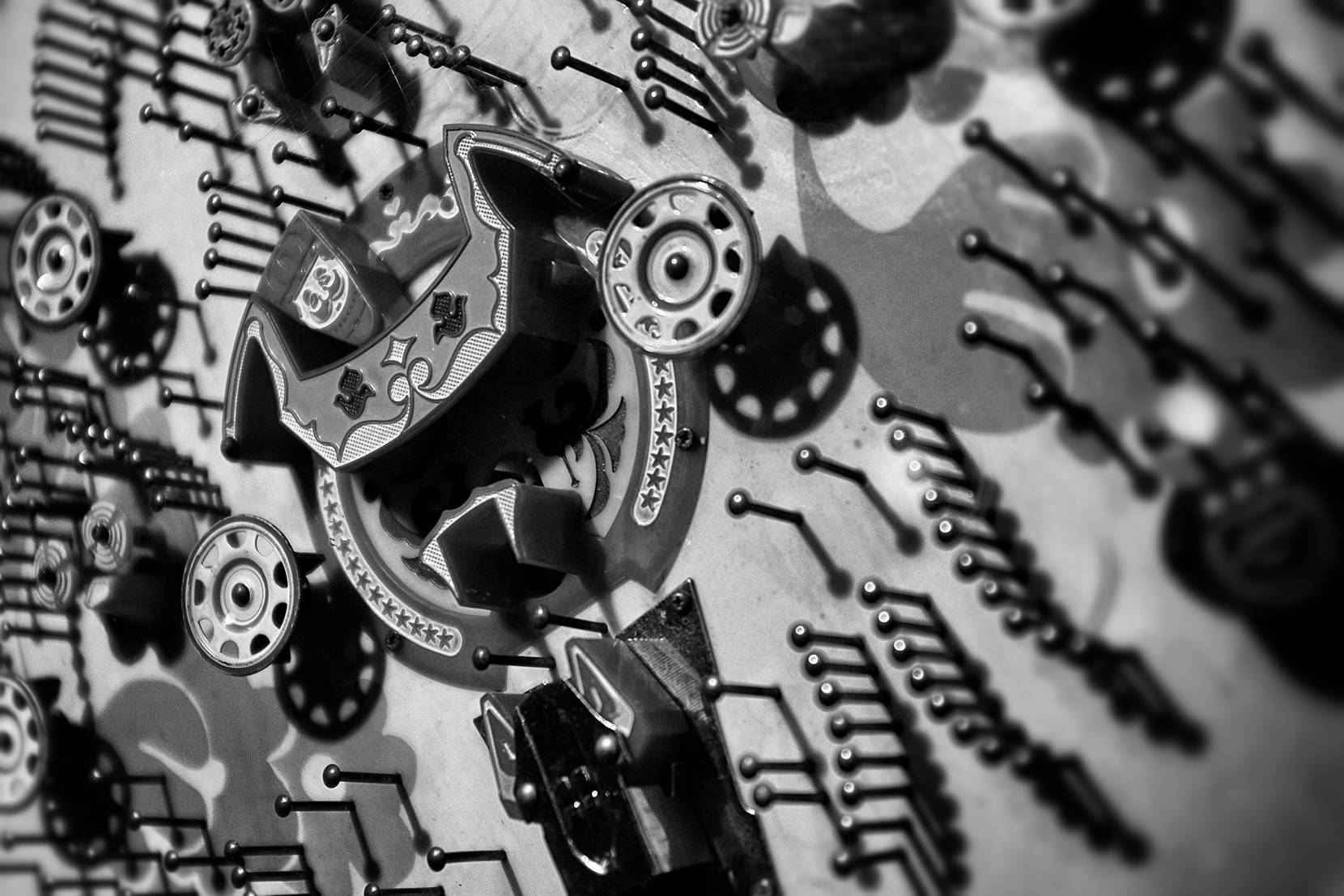}
  \caption{Close-up of Pachinko machine.  Photo by Neil Heeney, 2012,
    used with permission.
    \url{http://www.flickr.com/photos/heeney/8188927393}}
  \label{real pachinko 1}
  %
  %
\end{figure}

Pachinko \cite{Wiki,JapanZone} is a popular mechanical gambling game found in
tens of thousands of arcade parlors throughout Japan.
The player fires Pachinko balls (ball bearings) into a vertical,
nearly two-dimensional area filled with an
array of horizontal pins, spinners, winning pockets which reward the player
with more balls, etc.  See Figures \ref{real pachinko 1} and \ref{real pachinko 2}.
Since their invention in the 1920s until the 1980s,
balls launched using a mechanical flipper, similar to pinball machines,
while more recent Pachinko machines feature electrically controlled automatic
ball launching and slot-machine elements.  The name ``Pachinko''
(\begin{CJK}{UTF8}{goth}パチンコ\end{CJK})
comes from the Japanese word ``pachi pachi''
(\begin{CJK}{UTF8}{goth}パチパチ\end{CJK})
which imitates the sound of (in this case) metal
balls hitting metal pins.

\begin{figure}[t]
  \centering
  \begin{minipage}[t]{0.47\textwidth}
    \centering
    \includegraphics[trim=25 0 0 0,clip,width=\linewidth]{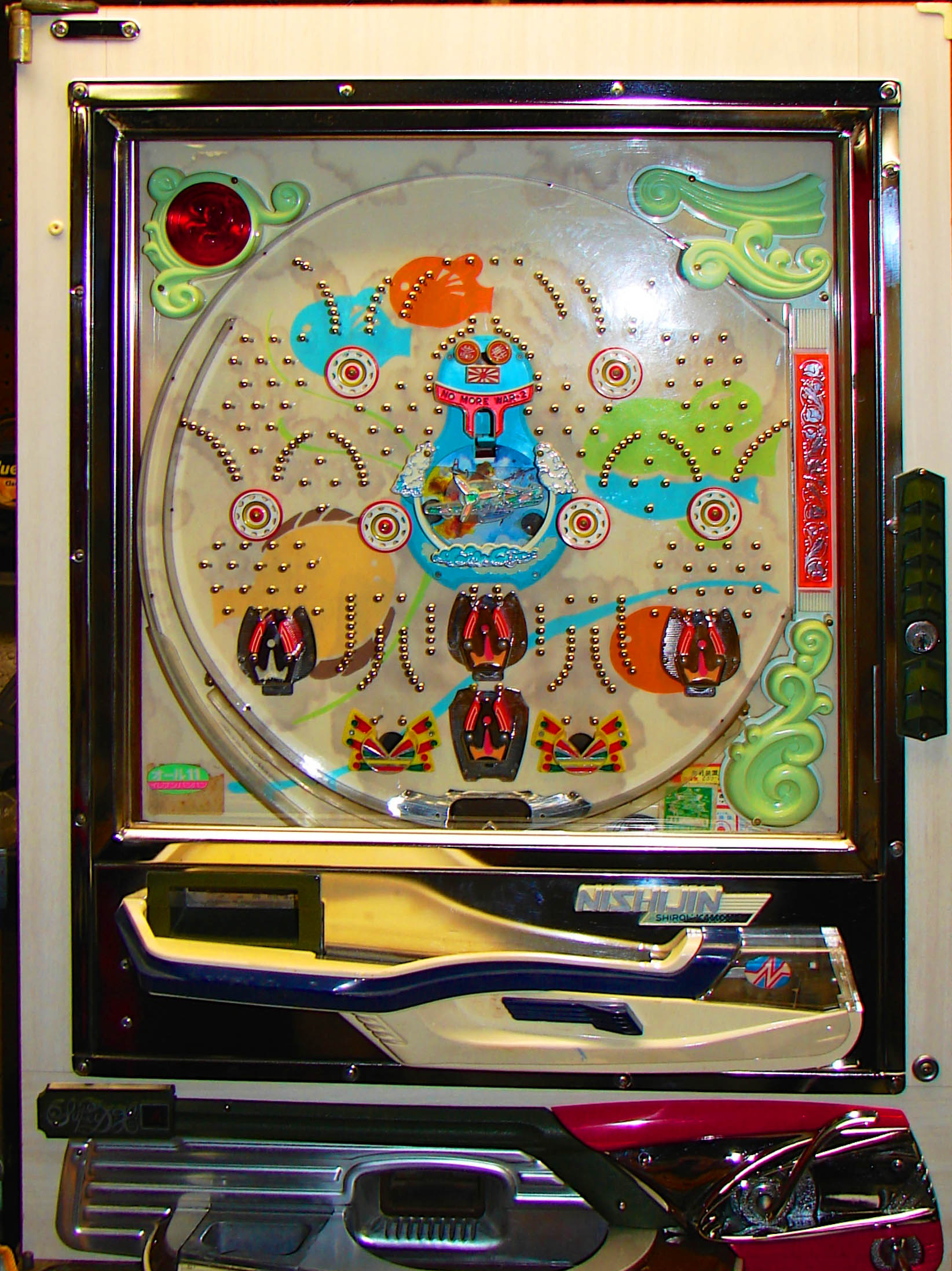}
    \caption[Nishijin mechanical Pachinko machine from the mid-20th century.
      Photo by Daniel Reed, 2008, used with permission.]
      {Nishijin mechanical Pachinko machine from the mid-20th century.
      Photo by Daniel Reed, 2008, used with permission.\footnotemark 
      }
    \label{real pachinko 2}
  \end{minipage}\hfill
  \begin{minipage}[t]{0.5\textwidth}
    \centering
    \includegraphics[trim=150 300 0 0,clip,width=\linewidth]{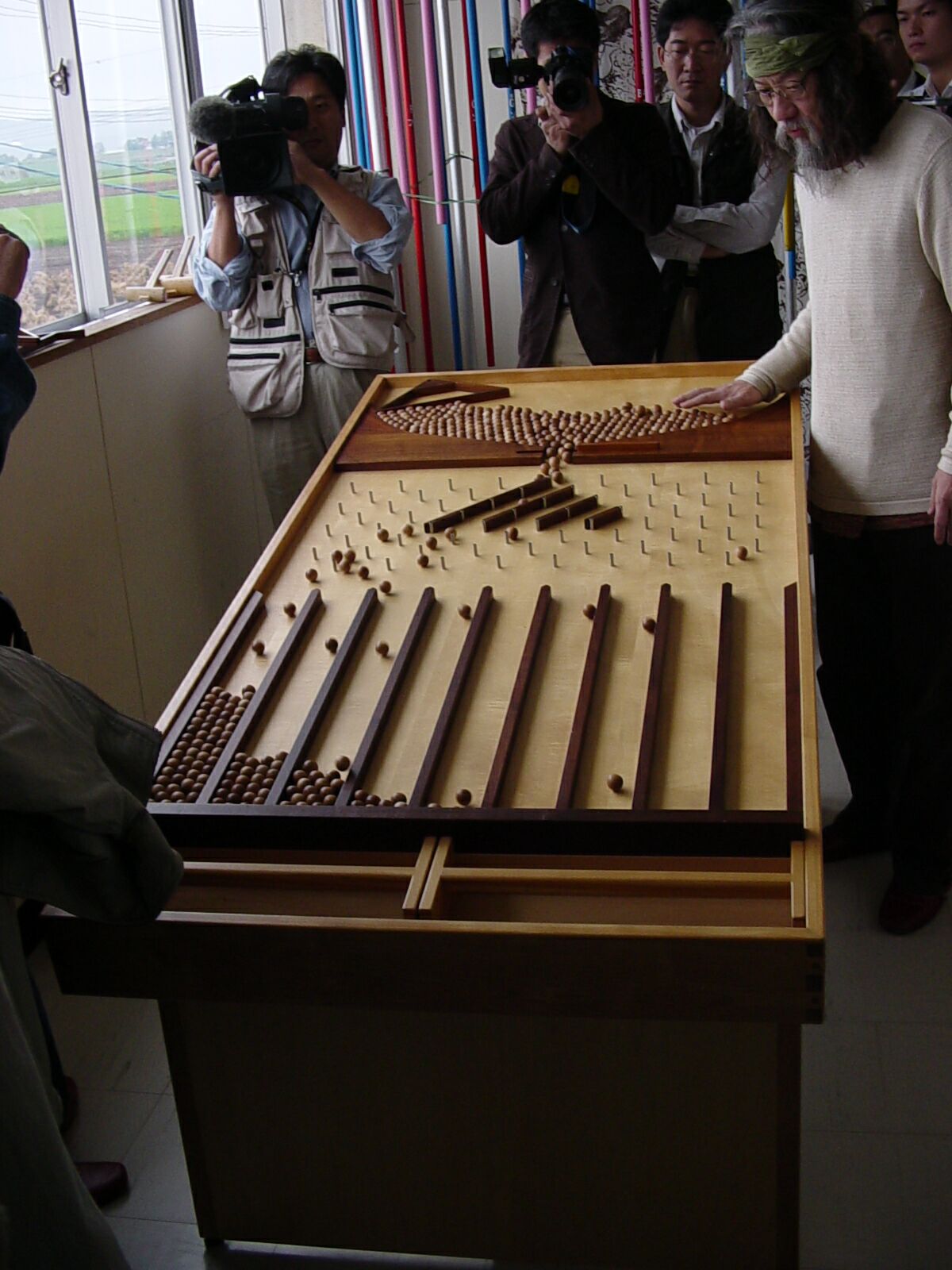}
    \caption{Jin Akiyama (right) and his mathematical Pachinko device.
      Photo taken by F. Hurtado at KyotoCGGT in Kyoto, 2007.}
    \label{akiyama}
  \end{minipage}
\end{figure}

In this paper,\footnotetext{\url{http://faculty.ccp.edu/faculty/dreed/Campingart/pachinko/}} we study an idealized geometry and dynamics of a simple form of
Pachinko: a single ball in a vertical plane falling through an arrangement of
pins, modeled as points.

Our study is motivated by a mathematical illustration/toy built by Jin Akiyama,
exhibited at KyotoCGGT 2007 and written about in 2008 \cite{Akiyama-Ruiz-2008};
see Figure~\ref{akiyama}.
This device features a regular, checkerboard grid of pins, with a reservoir of
wooden balls above, and a bank of containers below which allow a visual measurement
of how many balls go where.
If we model a ball hitting a pin as having a 50-50 chance of going left or
right, then the balls produce a binomial distribution,
which we visually recognize as a bell curve.
A similar illustration is in Eameses' famous Mathematica exhibit at the Boston
Museum of Science (since 1961) \cite{Eames}.

Akiyama's device can be augmented by wooden slats which force balls
to go to a particular direction (left or right) when hitting certain pins,
dramatically affecting the resulting distribution of balls
(as in Figure~\ref{akiyama}).
This type of forced ramp is common in real Pachinko machines as well,
simulated by an array of closely spaced pins
(see Figures \ref{real pachinko 1} and~\ref{real pachinko 2}).

The central problem addressed in this paper is what distributions of balls can
result purely from pins.  We consider three models/restrictions on the pins.
In the \emph{50-50 model}, a ball is only allowed to hit a pin dead on, and
thus fall to the left or right of the pin with equal probability.  In the
\emph{general model}, pins can be closely spaced, effectively allowing pins to
force the direction of the ball.  In both cases, we imagine the ball only
falling, not bouncing off of pins---effectively modeling perfectly ``inelastic" collisions---and
measure the probability of a ball reaching each column on the bottom
(\emph{outputs}).
Because the only random element in our Pachinko models is a 50-50 pin, all
output probabilities are \emph{dyadic}, i.e., of the form $i/2^k$ for
integers $i$ and~$k$.
Such probabilities can be represented finitely in binary, e.g., $0.10010111$. 

While this work is motivated by modeling Pachinko, the theory developed has 
real-world application in the design of liquid distribution systems. 
Given an input stream of liquid, the goal is to develop a network of 
pipes to output a specified amount of liquid into many different containers,
where each pipe junction equally distributes liquid between two outgoing children,
something easy to manufacture in practice.
The specific case of generating uniform distributions is a passive solution
to distributing equal amounts of liquid into bottles, 
and we provide uniform distribution networks for some
practical numbers of outputs.

\paragraph{Our results.}

In the general model, we prove a universality result: any distribution of $n$
dyadic probabilities (summing to 1) can be produced by a polynomial-size
arrangement of pins.  If all the probabilities can be specified in binary by
$k$ bits (i.e., the probabilities can be written with common denominator
$2^k$), then we show a constructive upper bound of $O(n k^2)$ pins and an
information-theoretic lower bound of $\Omega(n k)$.  These results leave a gap
of $\Theta(k)$ as an intriguing open problem.

The 50-50 model proves much more interesting because not all probability
distributions are possible.  For example, the only possible two-probability
distribution is $\left\langle \frac12,0, \frac12 \right\rangle$.  Nonetheless, we prove
several strong positive results:
\begin{enumerate}

\item For every probability distribution $\langle p_1, p_2, \dots, p_n \rangle$,
and for every $\epsilon > 0$, there is a 50-50 construction producing an output
probability within $\epsilon$ of each $p_i$, as well as extra outputs with
probability less than $\epsilon$.

\item For every dyadic probability distribution $\langle p_1, p_2, \dots, p_n\rangle$,
there is a constant scale factor $\alpha = 1/2^j$ and a 50-50 construction
producing output probability $\alpha p_i$ (i.e., a shifted version of $p_i$)
for each $i$, as well as additional garbage outputs.

\item While the uniform distribution is intuitively very difficult to produce
in the 50-50 model, we show that it is in fact possible for
$n \in \{1,2,4,8,16\}$ outputs, and conjecture that it is possible for all
$n = 2^k$.

\end{enumerate}

Because not all dyadic probability distributions are possible in the 50-50
model, the first two results are in some sense best possible.  Nonetheless, a
natural open question is to characterize exactly which probability
distributions are possible in the 50-50 model.  In particular, is every dyadic
probability constructible, ignoring all other output probabilities?  We
conjecture so, but currently only know (by the first result) that every
probability can be approximated arbitrarily closely.


\section{Models}

In this section, we define a simple formal model for Pachinko dynamics,
and then specialize to the 50-50 model and another simple grid model.
The environment is a planar arrangement of fixed point obstacles
called \emph{pins}; and a disk called a \emph{ball} which
starts at coordinate~$(x,\infty)$ for an input horizontal value~$x$.
Without loss of generality, we will only consider Pachinkos with
balls input at coordinate $x=0$ with unit diameter, 
since the dynamics of a Pachinko with a non-unit diameter input ball 
centered somewhere else can be equivalently simulated via a scaling
and translation. A \emph{Pachinko} is then the set of pins in the environment 
together with the ball's input position.
Given a pin $p$, we will notate its horizontal location as $p_x$ and its
height as $p_y$. 

The ball falls continuously straight down until obstructed by a pin;
see the left of Figure~\ref{fig:trajectory}.
If the bottom half of the ball touches a pin
whose $x$ coordinate is (left, right) of the ball center,
then the ball rotates (clockwise, counterclockwise) around the pin
until the pin reaches the (leftmost, rightmost) point of the ball, no longer
obstructing the ball's downward trajectory.
If the $x$ coordinates match, then the flip of a fair coin
determines the roll direction.

\begin{figure}[htbp]
\centering
\includegraphics[width=6.5 in]{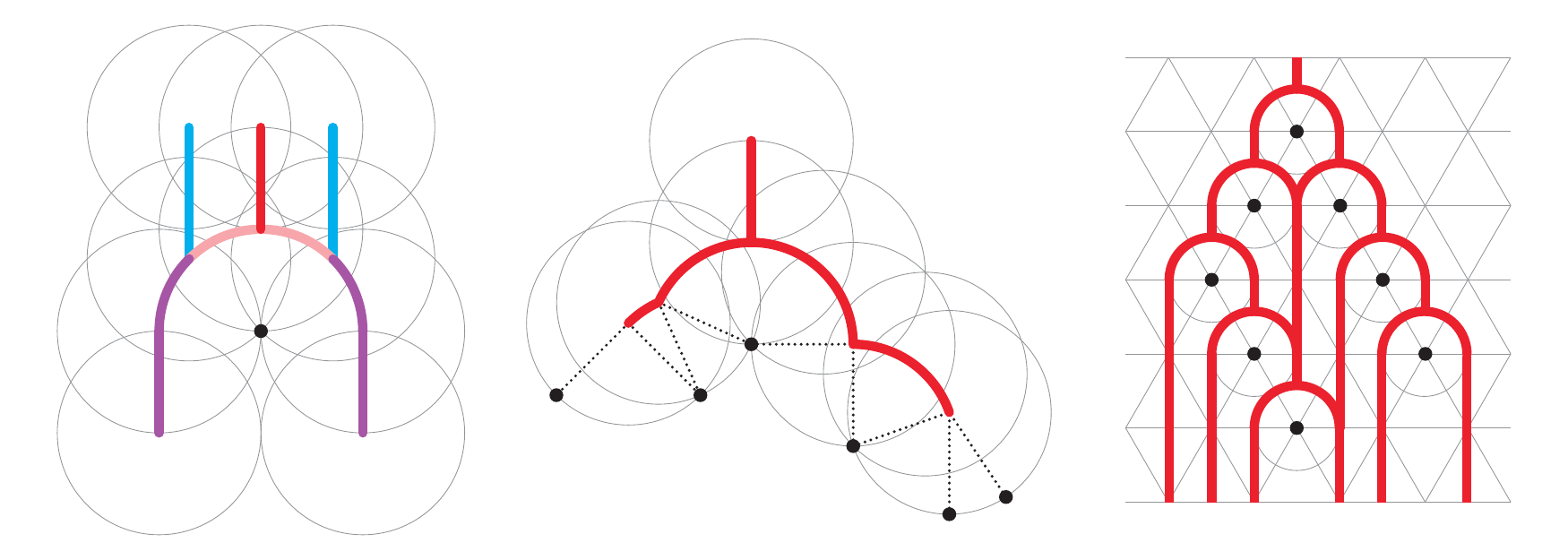}
\caption{Trajectory of falling ball in general model. Outlines of falling
balls are lightly shown, with the path of the ball's center emphasized. Dotted
lines are shown between ball locations that touch more than one pin and 
and the pins touched.
[Left] Balls striking a pin to the left or right of center (blue) rotate to the same side, 
while centered balls (red) fall to either side with equal probability. 
[Center] Balls that touch more than one pin while falling either 
continue rotating around the lower pin,
or get stuck if the center of the ball lies strictly between two touching pins.
[Right] A 50-50 Grid Pachinko.}
\label{fig:trajectory}
\end{figure}

In any case, the ball's rolling around a pin may at some point
be obstructed by another pin or pins. If all obstructing pins 
are either directly below or to one side of the ball center, the ball 
rotates away from the obstructing pins around the obstructing pin 
horizontally closest to the ball's center. 
Alternatively, if obstructing pins exist on 
both sides of the ball center, then the ball stops.
We call the ball \emph{stuck} in this case, 
coming to rest at a \emph{rest site} $s_i$, the location
of the bottom of the ball at rest. This condition is illustrated
in the center of Figure~\ref{fig:trajectory}.
The other possibility for termination is that the ball
reaches $(x'_i,-\infty)$ for some output value~$x'_i$. 
We call the ball \emph{dropped} in this case, with 
$(x'_i,-\infty)$ a \emph{drop site}. 
We call rest and drop sites collectively as
\emph{outputs} to the Pachinko. The number of outputs
is trivially bounded from above by twice the number of pins.

A \emph{50-50 Pachinko} is a Pachinko with the property that 
whenever a ball hits a pin, it hits dead on so that the ball rotates around the pin
to either side with equal probability.  This model is equivalent to requiring
that the pins lie horizontally at integer values away from zero, the initial $x$
coordinate of the ball, such that the ball can never hit two pins at the same time.
But since decreasing the vertical distance between pins of such a Pachinko while
maintaining that the ball only hits one pin at a time will
not change the output dynamics, without loss of generality we can compress pins
to lie on a unit equilateral triangular grid, see the right of Figure~\ref{fig:trajectory}. 
This observation motivates the study of Grid Pachinkos in Section~\ref{sec:grid}.

\subsection{Pachinko Graph}

It would be convenient to abstract away the geometry of a Pachinko and 
 represent the Pachinko as an ``equivalent" graph that is easier to analyze.

Consider the augmented directed graph with vertices corresponding to the pins, 
the ball input, and the outputs of a Pachinko. 
We store with each vertex with a location: the location
of the pin for pins, $(0,\infty)$ for the ball input, $(x',-\infty)$ for drop sites, 
and the location of the bottom of a stuck ball for rest sites. As mentioned previously,
the number of outputs is at most linear in the number of pins, so the number of
vertices in a Pachinko graph is also at most linear.
We add arcs to the directed graph in the following three ways.
\begin{enumerate}[(a)]
\item We add an arc from pin or input $p$ to pin or 
drop site $q$ if an input ball can hit or drop to $q$ directly 
after and not at the same time as hitting $p$, without
getting stuck or also hitting another pin. 
\item If a ball hitting $q$ directly after hitting $p$ gets stuck,
add an arc from both $p$ and $q$ to a vertex corresponding to the rest site $s$.
\item Lastly, if a ball hits more than one vertex directly after hitting $p$ without
getting stuck, add arcs in a chain from $p$ to the highest incident vertex, and from
the highest incident vertex to the next highest and so on.
\end{enumerate}

\begin{figure}[htbp]
\centering
\includegraphics[width=6.5 in]{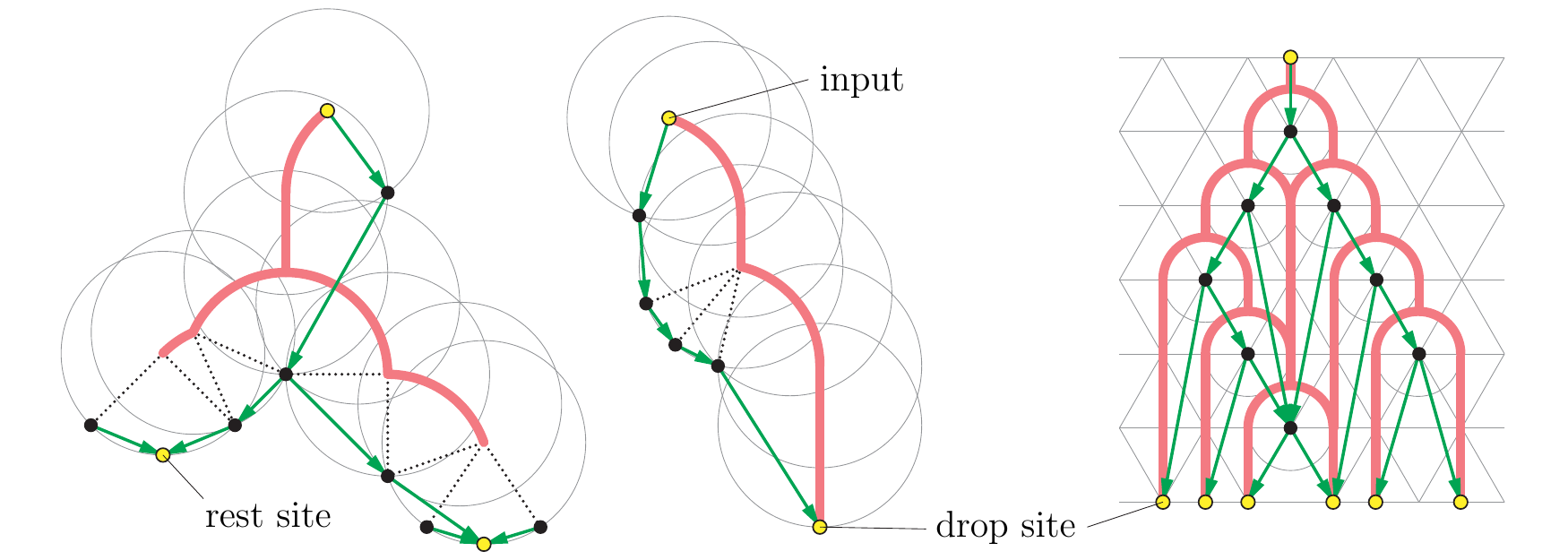}
\caption{Examples of Pachinko graphs shown in green. 
Inputs and outputs are shown as yellow circles. 
[Left] A Pachinko graph containing rest sites. 
[Center] A Pachinko graph hitting more than one
pin at a time without getting stuck.
[Right] A Pachinko graph of a 50-50 Grid Pachinko.}
\label{fig:graph}
\end{figure}

Figure~\ref{fig:graph} illustrates these cases. 
We call the resulting augmented directed graph a \emph{Pachinko graph}. 
For a Pachinko containing $n$ pins, we will show how to 
construct this graph in $O(n\log n)$ time, but first we will analyze some of the 
properties of Pachinko graphs.

\begin{lemma}
\label{lem:geom}
For every arc in a Pachinko graph, the pin corresponding to the tail 
is strictly higher than the pin or output corresponding to the head 
unless they are at the same location. The horizontal
distance traversed by any arc must be less than one ball width.
\end{lemma}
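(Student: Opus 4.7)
The plan is case analysis on the three arc types (a), (b), (c) of the Pachinko graph definition. Two underlying geometric observations drive every case: (i) a pin obstructing the ball lies on the \emph{strict} bottom half of the ball (interior of the lower half-disc), and (ii) from the moment the ball first touches a pin $p$ (center at $(p_x, p_y + \tfrac{1}{2})$) through the end of a half-turn around $p$ (center at $(p_x \pm \tfrac{1}{2}, p_y)$), the ball center stays in the height range $[p_y, p_y + \tfrac{1}{2}]$; afterward it falls freely, so the center is at height at most $p_y$. Arcs originating at the input vertex $(0,\infty)$ are trivial since every other vertex is at finite height within horizontal distance $\tfrac{1}{2}$ of the input column.

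For a type~(a) arc $p \to q$, I would first rule out that this arc can be created strictly during the half-turn around $p$. A two-circle intersection argument---the ball center travels on the circle of radius $\tfrac{1}{2}$ around $p$, while any other pin must lie outside the initial ball-disc centered at $(p_x, p_y+\tfrac{1}{2})$---shows that any new obstructing pin encountered during rolling must lie on the side of the ball center opposite $p$, placing obstructions on both sides and making the ball stuck (a type~(b) event, not type~(a)). Thus at the moment of the single-pin hit on $q$, the ball is either freely falling after leaving $p$ or just leaving $p$; in either case the ball center's height is at most $p_y$, and the strict bottom-half condition forces $q_y < p_y$. The horizontal bound $|p_x - q_x| \le |p_x - c_x| + |c_x - q_x| \le \tfrac{1}{2}+\tfrac{1}{2} = 1$ is strict unless $q$ lies exactly on the ball's horizontal diameter at height $p_y$, which is non-obstructing and so triggers no arc. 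When the head is a drop site the height inequality is trivial and the horizontal distance is $\tfrac{1}{2}$.

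For a type~(b) arc from a boundary pin $p$ to a rest site $s=(c_x,c_y-\tfrac{1}{2})$, $p_y \ge c_y-\tfrac{1}{2} = s_y$ because $p$ lies on the ball of radius $\tfrac{1}{2}$ around $(c_x,c_y)$, with equality iff $p$ sits at the ball's lowest point---in which case $p$ and $s$ coincide geometrically, the exception in the lemma; horizontally $|p_x - s_x| \le \tfrac{1}{2}$. For a type~(c) chain $p \to q_1 \to q_2 \to \cdots$, the within-chain arcs $q_i \to q_{i+1}$ are strictly decreasing in height by the ordering: two simultaneously-obstructing pins on the strict bottom half cannot share a height, since they would then be left-right mirror images across the vertical center-line and place obstructions on both sides of the ball center, forcing a type~(b) stuck event rather than a type~(c) chain. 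The lead arc $p \to q_1$ reuses the type~(a) analysis applied to the highest newly-hit obstructing pin, and horizontal bounds for within-chain arcs follow because any two points in the strict bottom half of a unit-diameter disc are at distance strictly less than $1$.

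The main obstacle is handling the borderline configurations where a pin touches the ball exactly on its horizontal diameter, sitting on the boundary between ``obstructing'' and ``not obstructing.'' I intend to resolve these by adopting the convention implicit in the paper's phrase ``no longer obstructing'' in the model description: only strict-bottom-half contacts register as hits that trigger arcs in the Pachinko graph. Under this convention all the inequalities above are strict exactly as the lemma states.
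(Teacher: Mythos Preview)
Your approach is correct in outline, and genuinely different from the paper's. You do a forward case analysis on the three arc-creation rules and track the ball center explicitly through the half-turn and the subsequent free fall. The paper instead gives a three-line contradiction argument: assume an arc $(p,q)$ with $p_y \le q_y$ and $p \neq q$; if $q$ is not a rest site, then for the ball to reach $q$ after $p$ it would have to encounter $q$ while still on the upper arc around $p$, which leaves pins on both sides and forces a stuck configuration (so no type-(a) or type-(c) arc is created); if $q$ is a rest site, then $p$ lies on the unit-diameter circle whose lowest point is $q$, whence $p_y > q_y$ unless $p = q$. The horizontal bound is dispatched in one sentence: after leaving $p$ the ball drops straight down with $p$ on its side, so it cannot reach anything more than one ball width away without another interaction.

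What each buys: the paper's argument is short and avoids tracking the rotation geometry at all, but it leans on the reader to see that ``hit $q$ while still rotating around $p$'' necessarily triggers the stuck condition rather than a continued roll. Your case split makes that step explicit and also handles the borderline ``pin on the horizontal diameter'' issue that the paper glosses over. One small inaccuracy to fix: you write that the ball first touches $p$ with center at $(p_x, p_y+\tfrac{1}{2})$, but this is only the dead-center case; in general the first contact has center at $(c_x, p_y+\sqrt{1/4-(c_x-p_x)^2})$ for whatever column $c_x$ the ball was falling in. Your height-range observation $[p_y, p_y+\tfrac12]$ for the center during the roll still holds, so the rest of your argument goes through unchanged once you correct the starting point.
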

\begin{proof}
To prove the first claim, assume for contradiction there 
exists arc $(p,q)$ with $p_y \leq q_y$ with $p\neq q$. 
If $q$ is not a rest site, an input ball hitting $q$ 
directly after hitting $p$ would rotate around the top of $p$
in the direction of $q$ and then necessarily get stuck between them, a contradiction.
Alternatively, $q$ is a rest site with $p$ on the circle with bottom most point at $q$.
But since $p\neq q$, that means $p_y > q_y$ a contradiction.

To prove the second claim, after rotating around any pin a ball drops directly down with the
pin on its side. The ball cannot move horizontally without 
interacting with a pin, so a ball cannot hit a pin
more than a ball width away from the starting pin without hitting another pin first.
\end{proof}

\begin{lemma}
The straight-line embedding of a Pachinko graph is acyclic and planar, 
with out-degree at most two at any vertex.
\end{lemma}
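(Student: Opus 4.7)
The plan is to establish the three claims separately: acyclicity, out-degree at most two, and planarity.

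For \emph{acyclicity}, I would invoke Lemma~\ref{lem:geom} directly. Pins, the input vertex at $(0,+\infty)$, the drop sites at $(x'_i,-\infty)$, and the rest sites (each strictly below the pair of pins that wedge the ball at rest) all occupy pairwise distinct locations, so the ``same location'' escape clause of the lemma never applies. Every arc therefore strictly decreases $y$, and following any directed path produces a strictly decreasing sequence of heights, precluding cycles.

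For the \emph{out-degree} bound, I would argue from the physical dynamics. After a ball rotates around a pin $p$, it ends up with $p$ at either its leftmost point (ball center at $(p_x+\frac12,p_y)$) or at its rightmost point (ball center at $(p_x-\frac12,p_y)$); these are the only two outcomes, and each determines a unique subsequent fall. That fall contributes at most one outgoing arc at $p$: by rule~(a) a single arc to the next pin or drop site; by~(b) an arc to a rest site if the ball gets stuck; or by~(c) a single chain-start arc to the topmost of the simultaneously touched pins. The input has a single downward trajectory from $(0,+\infty)$, so its out-degree is at most one, and rest and drop sites are sinks.

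\emph{Planarity} is the main obstacle, which I would tackle by contradiction. Assume two straight-line arcs $(p_1,q_1)$ and $(p_2,q_2)$ cross in their interiors. Since both arcs strictly descend, we may assume $p_{1,x}<p_{2,x}$ and $q_{1,x}>q_{2,x}$, which forces $(p_1,q_1)$ to correspond to a right-rotation outcome from $p_1$ (center trajectory along the vertical line $x=p_{1,x}+\frac12$) and $(p_2,q_2)$ to a left-rotation from $p_2$ (center trajectory along $x=p_{2,x}-\frac12$). By the ``less than one ball width'' bound of Lemma~\ref{lem:geom}, the two trajectory lines differ in $x$ by strictly less than $1$, so the two descending balls sweep overlapping horizontal intervals. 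I would then case-split on the vertical ordering of $p_1$ and $p_2$: in each case, the higher of the two balls must come within a ball radius of the lower pin (or of some pin above it) before reaching its stated destination, forcing either a stuck state or an earlier interception that contradicts the existence of the claimed arc.

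The hard case will be when the two trajectory lines nearly coincide while $p_1$ and $p_2$ sit at slightly different heights; closing that case will require combining the horizontal-span bound of Lemma~\ref{lem:geom} with the stuck-ball rule from the dynamics to show that the higher ball cannot avoid being deflected or trapped between the two original pins before reaching its purported next vertex.
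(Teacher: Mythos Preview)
Your treatments of acyclicity and of the out-degree bound are essentially the paper's: strictly decreasing height along arcs kills cycles, and the left/right dichotomy at a pin together with the case analysis on rules (a)--(c) caps the out-degree at two.

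For planarity your route diverges, and it is both harder and currently incomplete. The paper's argument hinges on one observation you do not make: if $(a,b)$ and $(c,d)$ cross with $a_y\ge c_y>b_y$, then $|a_x-c_x|\ge 1$, because otherwise the ball leaving $a$ toward $b$ would be intercepted by $c$. Combined with the ``less than one ball width'' bound from Lemma~\ref{lem:geom}, this forces \emph{both} heads $b,d$ to lie horizontally strictly between $a$ and $c$; the crossing then pins down the horizontal order of $b$ and $d$, and a single comparison of $b_y$ with $d_y$ finishes (whichever is higher would intercept the other arc). No trajectory-line bookkeeping is needed.

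By contrast, your plan has two soft spots. First, the inference ``$p_{1,x}<p_{2,x}$ and $q_{1,x}>q_{2,x}$ forces a right-rotation from $p_1$ and a left-rotation from $p_2$'' is not justified: those inequalities compare heads to heads and tails to tails, not a head to its own tail, so they do not determine the side on which the ball leaves either pin. Second, modeling each arc by a single vertical trajectory line $x=p_x\pm\frac12$ breaks down for arcs created by rule~(c) (chain arcs between simultaneously touched pins) and for rest-site arcs via rule~(b); in those cases there is no clean ``drop line'' to anchor the case analysis. Your final ``hard case'' is exactly where these issues bite, and it is avoided entirely by the paper's $|a_x-c_x|\ge 1$ step.
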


\begin{proof}
To prove acyclicity, assume for contradiction that a directed cycle exists. The cycle
cannot contain outputs as vertices corresponding to drop and rest sites
have no children. By Lemma~\ref{lem:geom}, the parent of every vertex 
then corresponds to a higher pin, so there can be no highest one, a contradiction. 

To prove planarity, assume for contradiction that two arcs $(a,b)$ and $(c,d)$ 
cross with $a_y>b_y$ and $c_y>d_y$; see Figure~\ref{fig:cross}. 
Without loss of generality, assume $a_y\geq c_y > b_y$, and 
$|a_x-c_x| \geq 1$ or else a ball from $a$ would hit $c$ before hitting 
or outputting at $b$. 
By Lemma~\ref{lem:geom}, $|a_x-b_x| < 1$ and $|c_x-d_x| < 1$, so both
$b$ and $d$ are horizontally between $a$ and $c$. Since the arcs cross,
$d$ must be horizontally between $a$ and $b$, and $b$ must be horizontally
between $c$ and $d$.
Then $b$ cannot be above $d$ or else a ball from $c$ would
hit or output at $b$ before $d$, and $d$ cannot be above $b$ or else a ball from $a$
would hit or output $d$ before $b$, a contradiction.

\begin{figure}[htbp]
\centering
\includegraphics[width=3 in]{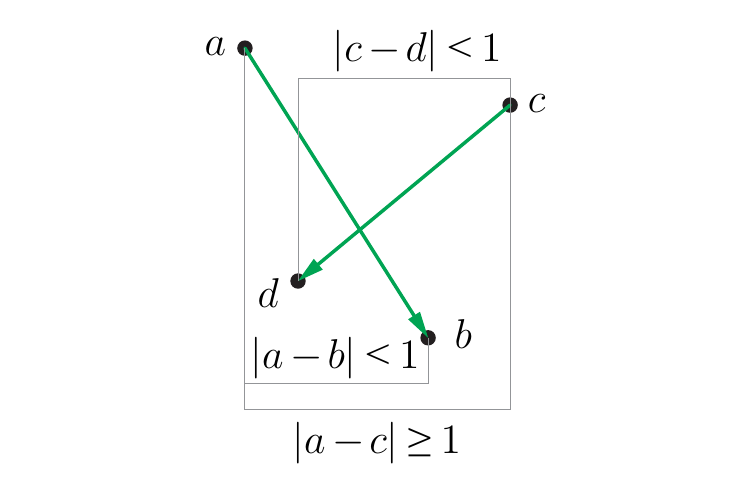}
\caption{Contradicting crossings in a Pachinko graph.}
\label{fig:cross}
\end{figure}

To prove every vertex has max out-degree at most two, first note that every 
output has out-degree zero. Further, arcs can leave pin location $p$
only when a ball hitting $p$ hits another pin or outputs directly after rolling
to the left or the right. Suppose for contradiction that more than two arcs leave $p$,
so two arcs to pins/outputs $q$ and $r$ exist to one side of $p$. 
So $q$ and $r$ are reached directly after rolling around $p$ to one side,
and are reached at the same time.
These arcs cannot be constructed by construction method: (a) because $q$
and $r$ are reached at the same time; (b) because arcs would only be added to
the rest site; nor (c) because an arc would only be added to the higher of $q$
or $r$. So arcs to $q$ and $r$ cannot exist, a contradiction.
\end{proof}

Since a Pachinko graph with $n$ pins has a linear number of vertices and 
bounded out-degree, it follows directly that Pachinko graphs have a 
linear number of edges; so storing a Pachinko graph along with 
vertex locations requires $\Theta(n)$ space.

%

\begin{theorem}
\label{thm:construct}
A Pachinko graph can be constructed from a Pachinko with $n$ pins
in $O(n\log n)$ time.
\end{theorem}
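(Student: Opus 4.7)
The plan is to use a plane sweep in $y$. After sorting the pins by $y$-coordinate in $O(n\log n)$ time, I sweep a horizontal line from bottom to top, maintaining a dynamic data structure $D$ that contains the pins the sweep line has already passed. When the sweep line reaches pin $p$, I first query $D$ to determine the outgoing arcs of $p$ in the Pachinko graph, and then insert $p$ into $D$. The input vertex is handled last with one extra query at $x_0 = 0$.

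The core query that $D$ must answer is: given a horizontal position $x_0$, return the pin $q \in D$ that a ball with center falling along the line $x = x_0$ strikes first, i.e.\ the pin maximizing $f_q(x_0) := q_y + \sqrt{1/4 - (q_x - x_0)^2}$ subject to $|q_x - x_0| \le 1/2$. For each pin $p$, one query at $x_0 = p_x - 1/2$ yields the candidate left child (the pin the ball hits after rotating off $p$ to the left), and one symmetric query yields the right child. A constant number of follow-up queries then resolves graph-construction cases (b) and (c): one query checks whether a second pin obstructs the ball on the opposite side of its center at the moment it touches the candidate child (producing a rest-site vertex), and one checks whether two pins yield exactly the same catch height (producing a chain of arcs to be threaded top-down). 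By Lemma~\ref{lem:geom}, only pins within horizontal distance strictly less than $1$ of $p$ participate in these checks, so $O(1)$ queries per pin suffice.

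To implement $D$ so that each query costs $O(\log n)$ and each insertion $O(\log n)$ amortized, I plan to maintain the upper envelope of the semicircles $\{f_q : q \in D\}$ explicitly, stored in a balanced BST keyed by $x$. Because every semicircle has radius $1/2$, any two intersect in at most two points, so the upper envelope is a Davenport--Schinzel sequence of order $2$ and has complexity $O(n)$. Point queries then reduce to binary search in the envelope; each insertion locates the relevant $x$-interval in $O(\log n)$ and splices the envelope locally, with total splicing work $O(n\log n)$ by the standard incremental-envelope amortization.

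The main obstacle is packaging this dynamic upper-envelope structure cleanly alongside the case-(b) and case-(c) consistency checks. Both parts use standard computational-geometry techniques, but the cases require a careful enumeration to ensure that every arc of the Pachinko graph is generated exactly once. Combining the $O(n\log n)$ sort with $O(n)$ insertions and $O(n)$ (bundles of $O(1)$) queries at $O(\log n)$ each yields the claimed $O(n\log n)$ running time.
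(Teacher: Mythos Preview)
Your plane sweep has a real gap: a pin $p$'s outgoing behaviour can depend on pins strictly \emph{above} $p$, which are absent from $D$ when you reach $p$ in a bottom-to-top pass. Take $p=(0,0)$ and $q=(0.7,0.1)$. A ball that lands on $p$ and rolls right has its center trace $p$'s upper semicircle; at the point $(0.3,0.4)$ the center is at distance $\tfrac12$ from both $p$ (to its lower-left) and $q$ (to its lower-right), so the ball is stuck. Thus $p$'s right outgoing arc goes to a rest site determined by $q$. But when you process $p$, $q\notin D$; your query at $x_0=p_x+\tfrac12=0.5$ sees no obstruction and wrongly emits a drop-site arc (or an arc to some unrelated lower pin). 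Fixing this later when $q$ is inserted is not described in your proposal and is not obviously $O(1)$ per pin. A related issue is that the vertical-line query at $p_x\pm\tfrac12$ answers ``first pin hit when dropping from $(p_x\pm\tfrac12,\cdot)$,'' not ``first pin encountered as the center traces the semicircular arc around $p$''; these differ precisely when rotation is interrupted, i.e.\ in cases~(b) and~(c), so your ``$O(1)$ follow-up queries'' do not in general identify the correct neighbour.

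The paper avoids both problems by first building the Voronoi diagram of \emph{all} pins. From $p_i$'s Voronoi cell it reads off the two action sites $\ell_i,r_i$ where a ball rolling around $p_i$ first leaves $p_i$'s cell, together with a DROP/REST label; this correctly accounts for neighbours above as well as below $p_i$, and discharges the circular-arc geometry once and for all. The sweep then runs top-to-bottom maintaining only a horizontally sorted list of active DROP action sites: for each new pin it binary-searches for the action sites that land on it, adds those arcs, and replaces them by the new pin's own $\ell_i,r_i$. Your upper-envelope structure is a reasonable stand-in for this second phase, but it cannot replace the global Voronoi step that the paper uses to resolve cases~(b) and~(c).
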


To prove this Theorem, we will construct the Pachinko graph by inductively constructing 
increasing subsets of the Pachinko graph, 
each containing only the arcs terminating at one of the $k$ highest pins. 
At each step, we maintain a Pachinko graph subset, and a 
sorted list of action sites ordered by horizontal position.

First we sort the pins by height to schedule which pin to add next, breaking ties
left before right. 
Next, we construct the Voronoi diagram~\cite{de2000computational} of the Pachinko pins, which has linear size. 
Each Voronoi cell $V_i$ contains one pin $p_i$. Consider the highest point $h_i$ in 
$V_i$ also contained in the unit diameter disk centered at $p_i$. If $h_i$ is not on
the unit diameter semicircle above and centered at $p_i$, 
then $p_i$ can never be reached by the input ball since pins in adjacent
Voronoi cells would block any ball from touching $p_i$. 
Alternatively if $h_i$ is on the semicircle, 
for each pin store the left and right endpoints $\ell_i$ and $r_i$ 
of the largest arc of the semicircle completely inside $V_i$ containing $h$. 

\begin{lemma}
Any ball hitting $p_i$ will have its center on the largest semicircle arc in $V_i$ containing $h_i$. 
\end{lemma}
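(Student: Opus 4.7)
The plan is to establish four properties of the ball's center $c$ at the moment of contact with $p_i$: (a) $|c - p_i| = 1/2$; (b) $c \in V_i$; (c) $c$ lies above $p_i$, so on the upper semicircle of $p_i$; and (d) among the arcs of that upper semicircle intersected with $V_i$, $c$ lies on the specific arc containing $h_i$.

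Properties (a)--(c) are short geometric observations. For (a), the ball has radius $1/2$ and touches $p_i$. For (b), pins cannot lie in the interior of the ball, so $|c - p_j| \geq 1/2 = |c - p_i|$ for every other pin $p_j$, which is the defining inequality of $V_i$. For (c), the ball reaches $p_i$ either by falling straight down, landing with $c$ directly above $p_i$, or by rolling off a previous pin $p_m$; Lemma~\ref{lem:geom} then gives $p_{m,y} > p_{i,y}$, and since $c$ sits on the upper semicircle of $p_m$ we have $c_y \geq p_{m,y} > p_{i,y}$.

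The main work is (d), which I plan to attack by contradiction. Suppose $c$ lies on an arc $A \neq A_h$, where $A_h$ is the arc containing $h_i$. Between $A$ and $A_h$ the upper semicircle exits $V_i$ through an angular gap caused by some pin $p_j$ whose ``shadow'' $(\phi_j' - \beta_j,\, \phi_j' + \beta_j)$ on $p_i$'s circle, with $\beta_j = \arccos|p_i - p_j| \in (0, \pi/2)$ and $\phi_j'$ the direction of $p_j$ from $p_i$, borders the $A_h$-facing end of $A$; without loss of generality $\phi_j' = \theta_A^{\max} + \beta_j$. A sum-to-product computation yields $p_j - c(\theta_A^{\max}) = \tfrac{1}{2}\bigl(\cos(\theta_A^{\max} + 2\beta_j),\,\sin(\theta_A^{\max} + 2\beta_j)\bigr)$. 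The existence of a second arc $A_h$ forces $\theta_A^{\max} + 2\beta_j < \pi$ (otherwise $p_j$'s shadow would cover the entire remainder of the upper semicircle), so $p_j$ lies strictly above $c(\theta_A^{\max})$ and within horizontal distance $1/2$, and by monotonicity of $\sin$ on $A$'s $\theta$-range this property extends to every $c \in A$. A ball falling straight down at $x = c_x$ would therefore hit $p_j$ at a height strictly greater than $c_y$, contradicting the hypothesis that the first pin touched is $p_i$ at $c$. In the remaining case where $c$ is reached by rolling from $p_m$ onto $p_i$, solving $|c - p_i| = |c - p_m| = 1/2$ restricted to $p_m$'s upper semicircle gives $\theta_c = \phi_m' \pm \beta_m$, which is precisely an endpoint of some arc of $V_i$'s intersection with the upper semicircle of $p_i$. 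Tracing the rolling dynamics, the center always descends while rolling on $p_m$'s upper semicircle, which identifies that endpoint as lying on $A_h$.

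The chief obstacle I anticipate is a clean treatment when several pins jointly close the gap between $A$ and $A_h$. I plan to reduce to the single-pin argument above by selecting the pin whose shadow has $\theta_A^{\max}$ (or $\theta_A^{\min}$, depending on which side of $A_h$ the arc $A$ lies on) as an exact endpoint; that pin still supplies the required obstruction. I also need to carry out the mirror version of the computation for arcs on the large-$\theta$ side of $A_h$ and cover the intermediate case where $A$ is sandwiched between two other arcs.
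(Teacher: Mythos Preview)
Your approach and the paper's are the same at the strategic level: assume the center lands on a ``wrong'' arc, identify the Voronoi neighbour $p_j$ responsible for the gap separating that arc from $A_h$, and argue that $p_j$ would have intercepted the ball first. The difference is entirely one of detail. The paper's proof is a three-line sketch: it asserts that the separating Voronoi edge belongs to a ``higher pin $p_j$'' and that ``$p_j$ is above $h'$ blocking any ball from reaching $h'$,'' without computing anything. Your sum-to-product identity giving $p_j - c(\theta_A^{\max}) = \tfrac12\bigl(\cos(\theta_A^{\max}+2\beta_j),\,\sin(\theta_A^{\max}+2\beta_j)\bigr)$, together with the observation that $\theta_A^{\max}+2\beta_j<\pi$ because $A_h$ must survive, is exactly the justification the paper omits. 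So your write-up is a genuinely more rigorous version of the same argument, at the cost of length.

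A couple of places to tighten before writing it out. In (c), your two cases should be ``the ball is in free fall when it first touches $p_i$'' versus ``the ball is still in contact with $p_m$ when it first touches $p_i$''; as written you appeal to $c$ sitting on $p_m$'s upper semicircle even in the first case, which is not true once the ball has left $p_m$ (the free-fall case is handled directly by the bottom-half contact rule, no need for Lemma~\ref{lem:geom}). In (d), the monotonicity-of-$\sin$ step is fine once you note that $A$, being distinct from $A_h$, cannot contain the apex $\theta=\pi/2$ (else it would contain a point at least as high as $h_i$), so $\sin$ is genuinely monotone on $A$; you should say this. The rolling subcase at the end (``tracing the rolling dynamics identifies that endpoint as lying on $A_h$'') is currently as hand-wavy as the paper; you will want to argue that as the center descends along $p_m$'s upper semicircle, the first point of $p_i$'s upper semicircle it meets is the higher of the two circle-intersection points, and that this point is on the $A_h$ side of $p_m$'s shadow.
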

\begin{proof}
Suppose for contradiction a ball hitting $p_i$ does not
have its center on this arc, and instead hits some point $h'$ on some other disconnected
arc of the semicircle. Then at least one Voronoi edge to some higher pin $p_j$ 
lies between the two disconnected arcs, with $p_i$ between $p_j$ and $h_i$ since $h_i$ is
at least as high as $h'_i$.  But then $p_j$ is above $h'_i$ blocking any ball from reaching $h'_i$,
a contradiction.
\end{proof}

A ball hitting $p_i$ may proceed to roll around it until it reaches $\ell_i$ or $r_i$, 
at which point it will stop or leave contact with $p_i$. We call the input site together with the set of 
$\ell_i$ and $r_i$ for each pin \emph{action sites}. The action sites will serve as 
infrastructure to construct the arcs between pins, the input site, and the output sites (drop sites
and rest sites) that form the Pachinko graph. When we construct an arc of the Pachinko graph
from an action site, what we really mean is to construct an arc 
from the input site if the action site is the input site, 
or from pin $p_i$ if the action site is $\ell_i$ or $r_i$. We say that $\ell_i$ and $r_i$
correspond to pin $p_i$.

We label each endpoint as either DROP or REST depending on if 
a ball reaching the endpoint will continue to move or not.
Endpoints always lie on the semicircle above and centered on $p_i$
by definition. If the endpoint does not also intersect a Voronoi edge, than
it is the endpoint of the semi circle. The ball will then fall away from $p_i$, so we label it DROP. 
Otherwise, the endpoint lies on a Voronoi edge or vertex, and is exactly  a half unit distance
from some subset of pins $Q\ni p_i$, all at or below $p_i$ or else the pin above would block
a ball from ever reaching the endpoint.
If $(\ell_i,r_i)$ is to the (left, right) of $p_i$ and
all the pins of $Q$ are below or to the (right, left) of $(\ell_i,r_i)$, 
then a ball at the endpoint
may continue to move, and we label it DROP. 
Alternatively if $(\ell_i,r_i)$ is to the (left, right) of $p_i$ and not all the pins of $Q$ 
are below or to the (left, right) of $(\ell_i,r_i)$, then a ball at the endpoint
will stop by definition, so we label it REST. 
Lastly, if $(\ell_i,r_i)$ is above or to the (right, left) of $p_i$, then a ball
at the endpoint will rotate around $p_i$ to the other endpoint, so we label it DROP.
These cases are illustrated in Figure~\ref{fig:endpoints}.

\begin{figure}[htbp]
\centering
\includegraphics[width=6.5 in]{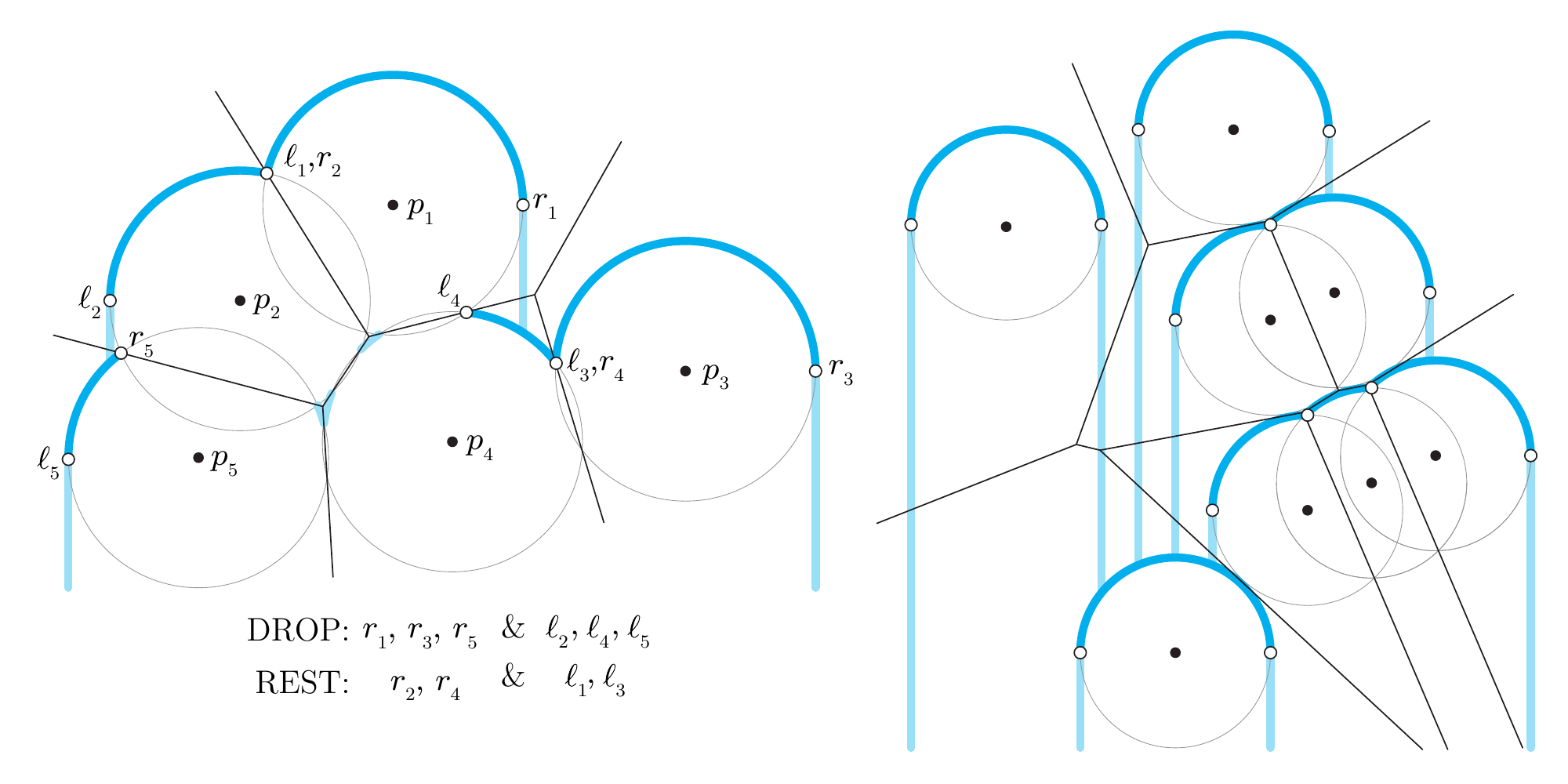}
\caption{Examples of two Pachinkos highlighting the hittable arc for each pin, including their left and
right output endpoints in different environments. Balls that reach endpoints labeled DROP and REST 
will continue to move or not respectively.}
\label{fig:endpoints}
\end{figure}



For the base case, there are no existing pins or action sites to consider, 
and the Pachinko graph subset contains no arcs as desired. 
We will maintain a list of \emph{active} action sites.
Since the ball drops from the input location, we initialize the list 
of active action sites with the ball's input location $(x,\infty)$ as a DROP action site.

For the inductive case, we are given the subset of the Pachinko graph containing only
the arcs terminating at one of the $k$ highest pins and a horizontally sorted list of active 
sites corresponding to all locations of balls that may leave contact with the $k$ highest pins.
Let $p_i$ be the $k+1$ highest pin, breaking ties left before right, and let $V_i$ be its Voronoi cell. 
Binary search for the set $A_i$ of all DROP action sites horizontally within half a unit of $p_i$.
Now update the Pachinko graph for each DROP action site $a\in A_i$ by
adding to the Pachinko graph subset an arc from $a$ to $p_i$. To add new active action sites, 
if there is any action site in $A_i$ to the (left,right)
or above $p_i$, add $(\ell_i,r_i)$ to the sorted active site list, while removing all DROP 
action sites in $A_i$ from the active site list. 

\begin{lemma}
The above procedure constructs the Pachinko graph containing only the arcs 
terminating at one of the $k+1$ highest pins and a horizontally sorted list of active sites
corresponding to all locations of ball centers that can leave the $k+1$ highest pins. 
\end{lemma}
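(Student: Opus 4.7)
The plan is to argue by induction on $k$, using the statement itself as the induction hypothesis. The base case $k=0$ holds by the initialization: the Pachinko graph subset is empty (correctly matching the empty set of arcs terminating at zero pins), and the active site list contains only the input $(x,\infty)$ labeled DROP, which is indeed the unique location from which a ball can currently fall. For the inductive step I would split the claim into two parts: (i) the arcs added while processing $p_i = p_{k+1}$ are exactly the arcs of the Pachinko graph whose head is $p_i$, and (ii) after the update, the active site list captures exactly all locations where a ball can leave contact with one of the $k+1$ highest pins.

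For part (i), soundness follows because each DROP action site $a \in A_i$ represents a ball currently falling with center within horizontal distance $1/2$ of $p_i$; since $p_i$ is the next highest pin and every higher pin has already been processed, nothing obstructs the fall from $a$ down to $p_i$, so $(a, p_i)$ is a genuine arc of the Pachinko graph. Completeness uses Lemma~\ref{lem:geom}: any arc with head $p_i$ has its tail strictly above $p_i$ (or coincident but distinct), hence among the $k$ higher pins or the input. By the inductive hypothesis the corresponding ball-leaving-contact location is already on the active list, and because the ball actually reaches $p_i$ it must be a DROP site within horizontal distance $1/2$ of $p_i$, placing it in $A_i$.

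For part (ii), active sites outside $A_i$ correspond to falling balls that cannot reach $p_i$ (they are outside the horizontal half-unit window that $p_i$ can intercept), so they remain valid and unchanged. The DROP sites in $A_i$ are correctly removed because their balls have now collided with $p_i$, and the continuation of the trajectory after rolling around $p_i$ is captured precisely by the endpoints $\ell_i$ and $r_i$: an incoming DROP site to the left of $p_i$ rolls the ball to $\ell_i$, one to the right rolls it to $r_i$, and one directly above hits dead-on and may roll to either, which justifies the paired condition in the procedure for adding $(\ell_i, r_i)$. The main obstacle I anticipate is the careful treatment of the DROP/REST labels on $\ell_i$ and $r_i$, together with multi-pin contact at Voronoi vertices: a REST endpoint corresponds to a new output of the Pachinko and should be recorded as a graph vertex without sourcing future arcs, while at a shared endpoint the tie-breaking rule ``left before right'' in the pin ordering must be respected so that the three construction modes (a), (b), (c) are applied consistently and no arc terminating at $p_i$ is missed or double-counted.
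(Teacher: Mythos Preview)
Your approach is essentially the same as the paper's: both argue the inductive step by splitting into (i) the arcs added at step $k+1$ are exactly those with head $p_i$, using Lemma~\ref{lem:geom} to force the tail among the $k$ highest pins and the inductive hypothesis to place the corresponding DROP site in $A_i$, and (ii) the active list is correctly maintained because removed sites are absorbed by $p_i$ while $\ell_i,r_i$ are the only new exit points. The paper's version is terser and does not explicitly flag the DROP/REST and multi-pin subtleties you raise, but your plan covers the same ground.
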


\begin{proof}
To prove the first claim, we need only check that the added arcs are exactly the arcs terminating
at $p_i$. By construction, all arcs end at $p_i$. Now suppose for contradiction that some arc
of the Pachinko graph terminating at $p_i$ from pin or the input is not added by the above procedure. 
If the arc starts from a pin $p_j$, then it is one of the $k$ highest pins by Lemma~\ref{lem:geom}, 
and either $\ell_j$ or $r_j$ will be labeled DROP or else a ball hitting $p_j$ would not leave. 
Further, either $\ell_j$ or $r_j$ (or the input location) must be within a horizontal half unit 
distance from $p_i$ by Lemma~\ref{lem:geom}, with no pin obstructing the path between them.
But then $p_i$ would have a corresponding DROP action site above it, and the above procedure
would construct the corresponding arc to $p_i$, a contradiction. 

To prove the second claim, all action sites removed by the procedure certainly 
would not be able to reach any pin below $p_i$. Further, $\ell_i$ and $r_i$ correspond
to the only locations a ball could leave $p_i$, proving the claim.
\end{proof}

After constructing all the arcs of the Pachinko graph terminating at pins, it remains to construct the arcs 
to the outputs. Add a drop site at $(x',-\infty)$ corresponding to the horizontal position of each active DROP
action site and an arc from the action site to the drop site. 
It is possible that multiple REST endpoints coincide.
In any case, we construct a single rest site for all
REST endpoints in the same location, 
with location half a unit directly below the location, and construct
an arc from each REST action site at the location to the constructed rest site. 
The correctness of these constructions follow directly
from the definitions of drop, rest, and action sites.
An example of this algorithm applied to a Pachinko is shown
in Figure~\ref{fig:algorithm}.

\begin{figure}[htbp]
\centering
\includegraphics[width=6.5 in]{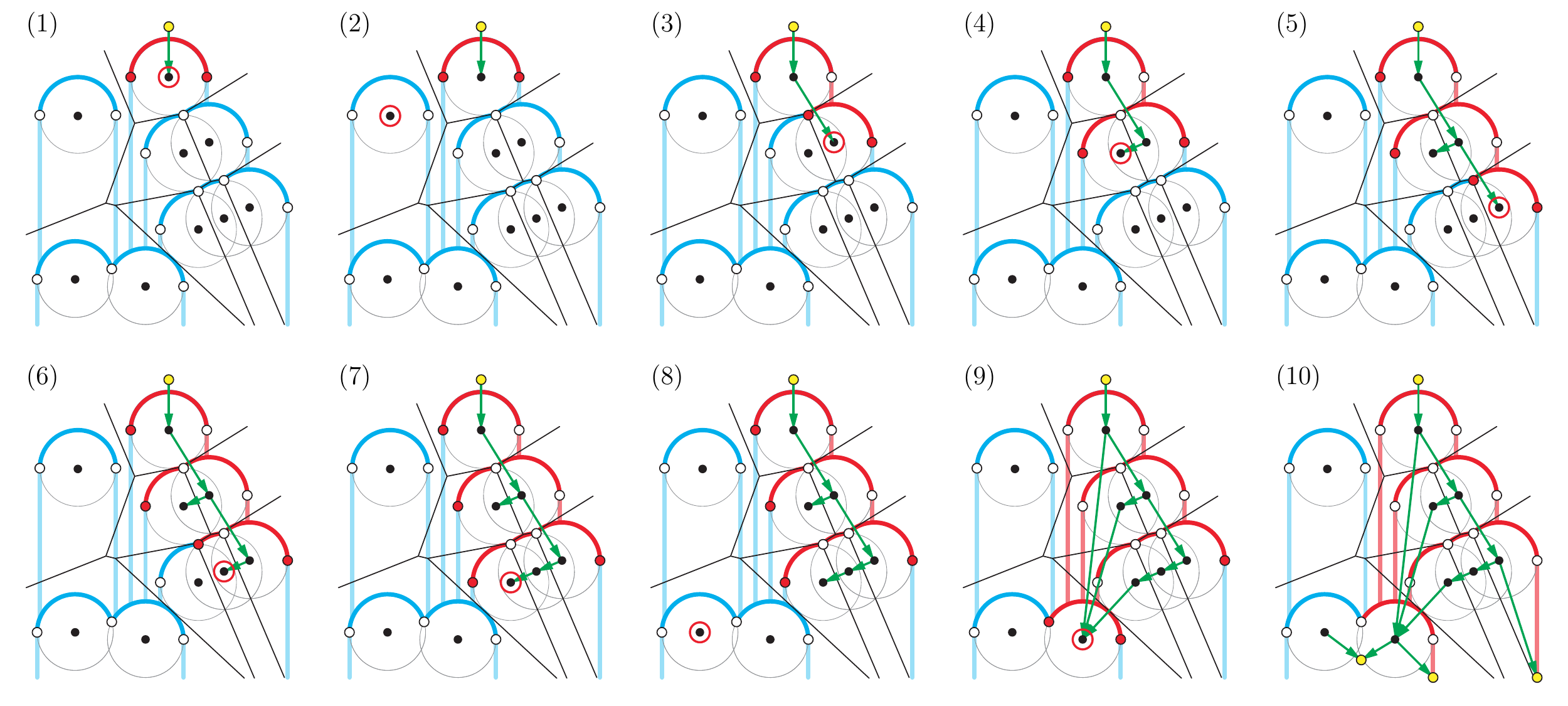}
\caption{Example showing the described Pachinko graph construction algorithm for 
an input Pachinko. Blue paths show possible ball paths for any input, 
while the red paths show ball paths for the given input. The target pin is circled
in red, active sites are colored red, and Pachinko graph arcs are shown in green.
}
\label{fig:algorithm}
\end{figure}

Now to prove Theorem~\ref{thm:construct}. \\

\begin{proof}
Construct the Pachinko graph using the above procedure. Sorting and construction of the Voronoi diagram
each takes $O(n\log n)$ time. The calculating and labeling intersections between circles centered 
at pins and the Voronoi cell containing them is at most linear since the number of edges in the 
Voronoi diagram is linear and a circle can cross a line at most twice, and each can be 
calculated in constant time. The number of action sites is linear since it is bounded above by
twice the number of pins, and binary searching for DROP action sites from the horizontally sorted list
of active action sites with location near a pin takes $O(\log n)$
time per pin. Further, since each DROP action site is immediately removed after being found, the
total search and maintenance time is $O(n\log n)$, leading to an $O(n\log n)$ total construction time. 


%
%


\end{proof}


\begin{theorem}
The probability that an input ball hits any pin of a Pachinko or outputs
at a rest or drop site can be calculated for all pins and outputs in $O(n\log n)$ time. 
\end{theorem}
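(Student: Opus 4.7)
The plan is to reduce the probability computation to a straightforward dynamic program on the Pachinko graph, leveraging its structural properties (acyclic, planar, bounded out-degree, linear size) already established.

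First, I would invoke Theorem~\ref{thm:construct} to build the Pachinko graph in $O(n\log n)$ time, along with the vertex locations. Next, because the graph is acyclic with out-degree at most two, I would topologically sort it; by Lemma~\ref{lem:geom} the sort is essentially by decreasing $y$-coordinate of the associated location (with ties broken consistently), and since we already sort the pins by height during the construction, obtaining the topological order costs no more than $O(n)$ additional time. Collectively this gives us an ordered DAG with $O(n)$ vertices and $O(n)$ arcs on which to run the dynamic program.

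The dynamic program itself assigns to every arc $(u,v)$ a \emph{transition probability} $\pi(u,v)$: if $u$ has out-degree $2$ (which, by the construction rules, can occur only when the ball leaves $u$ after being struck dead-center and can go either way), then $\pi(u,v) = \tfrac12$ for each outgoing arc; otherwise $\pi(u,v) = 1$. This correctly models the Pachinko dynamics because the only random event is the 50-50 split at a dead-on collision, and chains added by construction rule (c) as well as rest-site arcs from rule (b) are deterministic. I set $P(\text{input}) = 1$ and, processing vertices in topological order, compute
\[
  P(v) \;=\; \sum_{(u,v)\in E} P(u)\,\pi(u,v).
\]
Each arc is examined once, and since out-degree (and hence in-degree averaged over the graph) is $O(1)$, the total work of the DP is $O(n)$. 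The $P(v)$ for pin vertices give hit probabilities, and the $P(v)$ for rest- and drop-site vertices give output probabilities.

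There is no real obstacle; the main thing to verify is that the Markov property used by the DP is faithful to the physical process, i.e., that once a ball arrives at a vertex its subsequent behavior depends only on that vertex and future coin flips. This follows from the definition of the arcs: by construction, an arc $(u,v)$ represents exactly the event ``the ball, having just left contact with $u$, next interacts with $v$,'' and the randomness at each pin is independent. Adding the $O(n\log n)$ graph-construction cost to the $O(n)$ DP cost yields the claimed $O(n\log n)$ bound.
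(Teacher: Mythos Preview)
Your dynamic program has a genuine gap: the process is \emph{not} Markovian at the vertex level in the general model. Your claim that out-degree $2$ at a pin $p$ ``can occur only when the ball leaves $u$ after being struck dead-center'' is false. In a general Pachinko, a ball may arrive at $p$ with its center strictly to the left (and then deterministically roll left) or with its center strictly to the right (and then roll right), depending on which parent it fell from. If $p$ receives balls from parents on both sides, it has out-degree $2$ with no coin flip at all; assigning $\pi(p,\cdot)=\tfrac12$ to both outgoing arcs would then mis-route probability. More generally, the fraction of $P(p)$ that exits left versus right depends on \emph{which} incoming arcs contributed, not just on the total $P(p)$, so the recurrence $P(v)=\sum_{(u,v)} P(u)\,\pi(u,v)$ with $\pi$ a function of $u$ alone cannot be correct.

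The paper avoids this by carrying probability on \emph{arcs} rather than vertices and routing each incoming arc's probability to the appropriate outgoing arc using geometry: if the parent $u$ is more than half a ball-width from $p$ on the same side as the outgoing arc, or less than half a ball-width on the opposite side, all of that arc's probability goes out that side; if $u$ is exactly half a ball-width away the ball hits dead-center and the probability is split. The hit probability at $p$ is then the sum over incoming arcs. This still runs in $O(n)$ time after the $O(n\log n)$ graph construction, but the per-arc routing rule is the missing idea in your argument. (Your approach would be correct as written for 50-50 Pachinkos, where every hit is dead-center, but the theorem is stated for general Pachinkos.)
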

\begin{proof}
To calculate probabilities, construct the Pachinko graph, and calculate probabilities in the graph 
breadth first. First we calculate the probability transferred
along each Pachinko graph arc. If the edge starts at the input site, 
the edge carries probability 1. Otherwise, the arc $a$ starts at a pin $p$ (with out-degree
at most two, possibly one to the left and one to the right) and points to
a vertex $v$ either on the left or right of $p$, and we calculate the probability transferred
along the arc by summing probabilities coming from the parents of $p$. 
For each arc $a'$ terminating at $p$ starting from a vertex $u$ either more than half a unit distance
horizontally from $p$ on the same side as $v$ or less than half a unit distance horizontally
from $p$ on the opposite side as $v$, transfer the total probability of $a'$ to $a$. 
This assignment is correct because any balls falling from $u$ to $p$ falls with its center
on the same side as $v$, so will always output on that side.
Additionally, for each arc $a'$ terminating at $p$ starting from a vertex $u$ exactly half a unit 
distance horizontally from $p$, transfer half the probability of $a'$ to $a$. 
This assignment is correct because any ball falling from $u$ to $p$ falls centered with $p$
and will split its probability between its two children. 
These cases are illustrated in Figure~\ref{fig:prob}.
The probability that a ball hits any pin $p$ is then the sum of the probabilities of all arcs 
into $p$. All arc probabilities can only contribute to any sum once, 
so all probabilities can be calculated in breadth first order in linear time. Thus
all probabilities can be be calculated in $O(n)$ time on top of $O(n\log n)$ time
needed to construct the Pachinko graph.

\end{proof}

\begin{figure}[htbp]
\centering
\includegraphics[width=6.5 in]{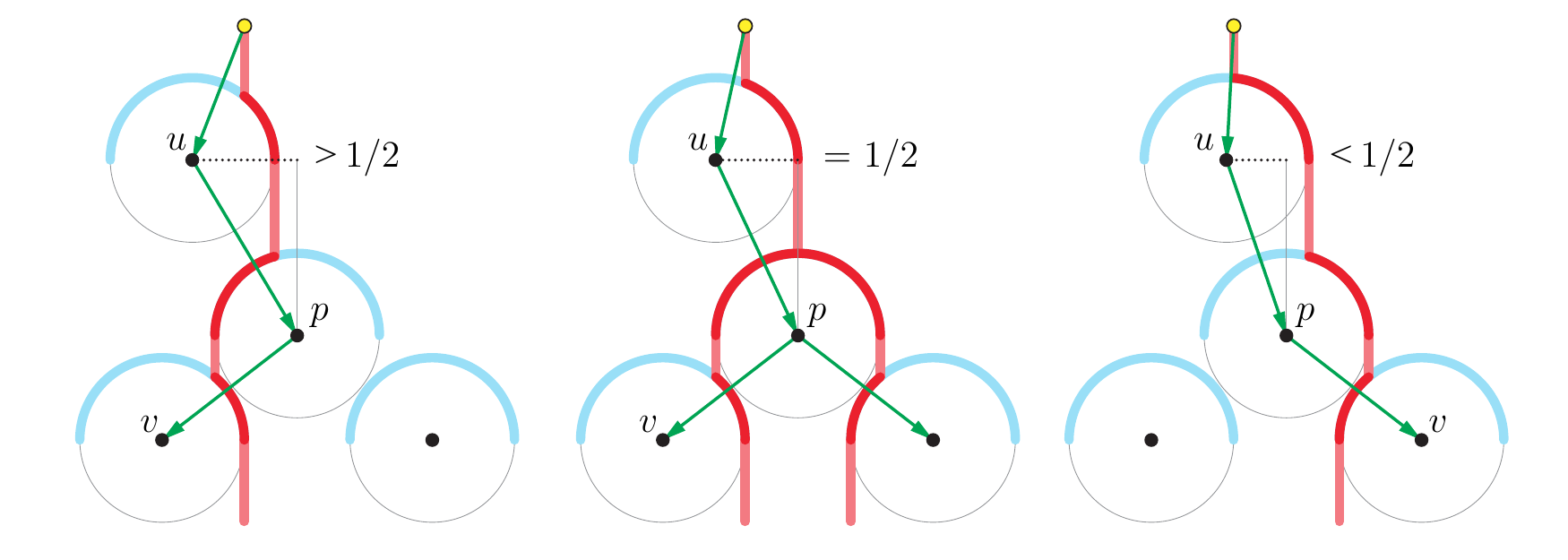}
\caption{Three cases for routing probability through a vertex $p$.
[Left] A vertex $u$ more than half a unit distance
horizontally from $p$ on the same side as $v$. 
[Center] A vertex $u$ exactly half a unit 
distance horizontally from $p$.
[Right] A vertex $u$ 
less than half a unit distance horizontally from $p$ on the opposite side as $v$.
}
\label{fig:prob}
\end{figure}
 
\subsection{Grid Pachinkos}
 \label{sec:grid}
 
General Pachinko can have complicated behavior, especially when three or more 
pins can touch a ball at the same time. Thus in constructing Pachinkos, it will be easier
to restrict ourselves to more idealized Pachinkos, namely Pachinkos with pin locations
and ball input location restricted to lie on a unit equilateral triangular grid. 
Without loss of generality, we let the ball input column be column zero, 
with columns every half unit numbered increasing to the right and 
decreasing to the left. We will call row 1 the row containing the highest pin at height $h$, 
with row $k$ containing pins at height $h-k\sqrt{3}/2$.  
Recall that such a setup results in 50-50
Pachinkos, with each pin causing an incident ball to rotate to the left or right with
equal probability. These pins act normally, so we call them N-pins.

However, to allow Grid Pachinkos to simulate more of the behavior of a general Pachinko,
we allow the placement of L-pins, R-pins, and S-pins at grid vertices, pins where incident balls must roll
left, right, or stop respectively. We can simulate Grid Pachinkos using general
Pachinko dynamics by simulating N-pins, L-pins, R-pins, and S-pins with pin arrangements 
contained in $\delta\times\delta$ blocks as shown in Figure~\ref{fig:grid}, 
and increasing the distance between each row and column by $\delta$. 
Since we can simulate Grid Pachinkos with L-pins, R-pins, and S-pins using general Pachinko 
dynamics, any output distributions we can construct using Grid Pachinkos we can
also construct using general Pachinkos with three times the number of pins. A nice
property of Grid Pachinkos is that the probability flow through the Pachinko graph
is independent of the location of pins, and computable simply from the structure of the
Pachinko graph.

\begin{figure}[htbp]
\centering
\includegraphics[width=6.5 in]{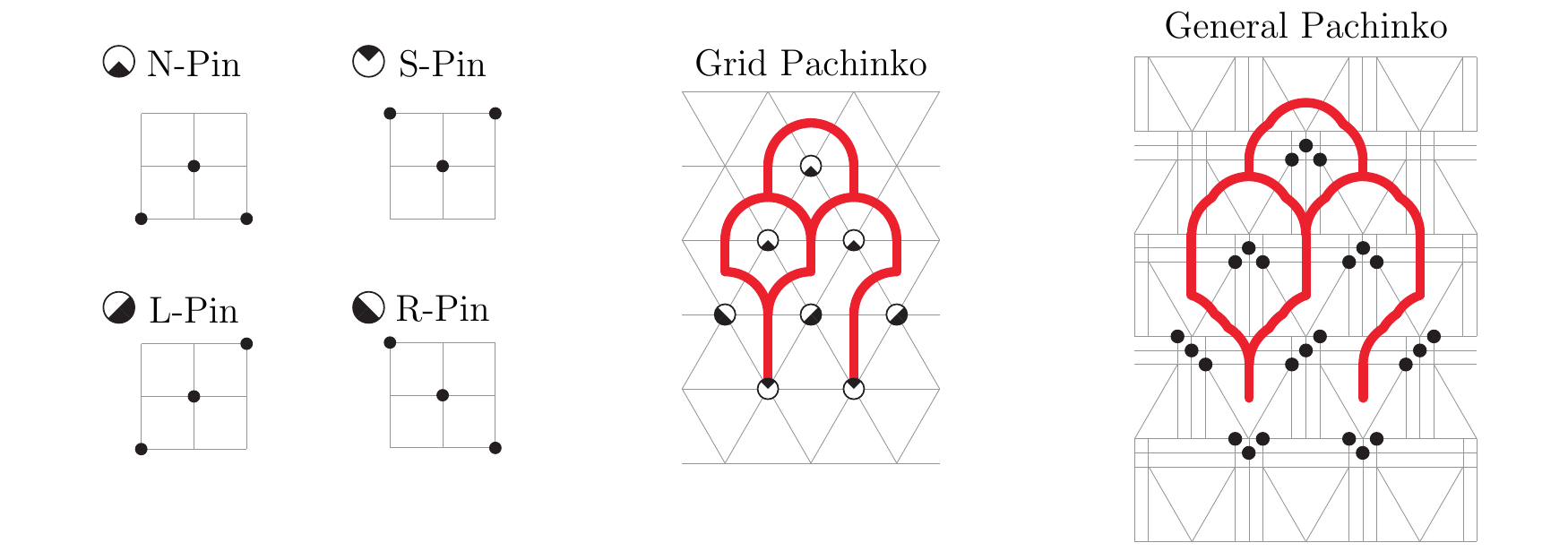}
\caption{Simulating Grid Pachinkos with General Pachinkos. 
Each pin gadget is $\delta \leq 1/2$ tall and wide.
Modifying the triangular grid allows space for the gadgets.}
\label{fig:grid}
\end{figure}

\section{50-50 Pachinkos}

50-50 Pachinkos are a subset of general Pachinkos, and are more restrictive
in the possible distributions one can construct. 

\subsection{Invariants on the Distribution}

Unlike General Pachinkos or Grid Pachinkos that
can use L-pins and R-pins to move probability arbitrarily to any location, 50-50
Pachinkos must output all probability to drop sites in a ``centered" manner.

\begin{lemma}
\label{lem:span}
If there are $k > 1$ outputs, they span strictly between $k$ and $2k$ columns with no two consecutive columns lacking an output.
\end{lemma}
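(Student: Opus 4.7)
The plan is to prove three sub-claims: (i) $s := c_k - c_1 + 1 > k$; (ii) $s < 2k$; and (iii) no two consecutive columns in $[c_1, c_k]$ both lack an output. Statement (iii) forces each gap $c_{i+1} - c_i$ to be at most $2$, giving $s \leq 1 + 2(k-1) = 2k-1$, so (iii) implies (ii), and it suffices to establish (i) and (iii). Throughout I would work in the equivalent unit triangular grid of Section~\ref{sec:grid}. A useful preliminary is that the set of columns visited in the Pachinko graph equals the integer interval $[c_1, c_k]$: each arc shifts the column by $\pm 1$ (so the visited set is an interval), and its extreme columns must be drop sites rather than pins, since a pin at column $c$ would force both $c-1$ and $c+1$ to be visited.

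For (iii), I would argue by infinite descent. Suppose columns $c, c+1 \in [c_1, c_k]$ are both non-outputs; then every Pachinko-graph node at either column is a pin. Starting from any pin $p_0$ at column $c$, its right-outgoing arc terminates at a node $q_0$ at column $c+1$ that must itself be a pin (else it would be a drop site at $c+1$, contradicting the assumption). From $q_0$ the left-outgoing arc lands on a pin $p_1$ at column $c$, and iterating produces an infinite chain $p_0, q_0, p_1, q_1, \ldots$ of pins whose rows strictly increase along the chain by Lemma~\ref{lem:geom}, contradicting finiteness of the pin set.

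For (i), the crux of the argument, I would exploit the triangular grid's column-parity restriction. Take a deepest pin $(c^*, r^*)$, one with $r^*$ maximal among all pin rows. Since pins in a single row of the grid share column parity, no pin sits at $(c^*-1, r^*)$ or $(c^*+1, r^*)$. Consequently no ball can ever reach column $c^*$ at a row $\geq r^*$: such an arrival would require a pin at column $c^* \pm 1$ at some row $\geq r^*$, which is ruled out by the maximality of $r^*$ (for rows $> r^*$) and by the parity obstruction (for row $r^*$ itself). On the other hand, a ball arriving at column $c^*$ at any row $< r^*$ must hit the next pin at that column (which exists, at some row $\leq r^*$), so it cannot drop there. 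Hence no drop site sits at column $c^*$; since $c^*$ is a pin column strictly between $c_1$ and $c_k$, we have $c^* \in [c_1, c_k] \setminus \{c_1, \ldots, c_k\}$, giving $k < s$. The main obstacle is exactly (i): purely combinatorial reasoning about binary trees with $\pm 1$ column steps does not suffice, since one can write down abstract such trees whose leaf columns exactly fill $[c_1, c_k]$, so the argument must genuinely exploit the grid parity at the deepest row to force an ``uncovered'' column inside the output span.
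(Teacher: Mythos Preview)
Your proof is correct. The overall structure matches the paper's: both show that the column of a deepest hit pin is a non-output inside the span (your~(i), the paper's ``one column necessarily blocked by the lowest hit pin''), and both forbid two consecutive non-output columns (your~(iii), from which~(ii) follows just as you say). The execution of~(iii) differs. The paper takes the lowest hit pin in each of the two adjacent columns, uses triangular-grid parity to see they cannot share a row, and observes that the strictly lower one drops a ball into the other column below every pin there---an output, contradiction. You instead build an infinite alternating chain $p_0, q_0, p_1, q_1, \dots$ by repeatedly following right- and left-arcs between the two columns, each step strictly lowering height by Lemma~\ref{lem:geom}; this never invokes parity and would go through in any model where arcs strictly decrease height, whereas the paper's one-shot argument is shorter but relies on the grid to break the height tie between the two ``lowest'' pins. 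One small caveat: in~(i) you want the deepest \emph{hit} pin (equivalently, a deepest pin vertex of the Pachinko graph), so that $c^*$ is guaranteed to lie in $[c_1,c_k]$; this is already implicit in your graph-based setup, but worth stating explicitly.
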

\begin{proof}
There must be at least $k+1$ columns spanned by the outputs, $k$ for the outputs, and one column necessarily blocked by the lowest hit pin. No two consecutive columns lack an output. Suppose for contradiction two such columns exist for which there exist outputs to the left and right of these columns. Then there must exist hit pins in both columns or else a ball could not travel across them. The lowest such pin in each column cannot be next to each other because of the grid, so one must be above the other. The lower pin will necessarily output in the column of the higher pin, a contradiction.
\end{proof}

\begin{theorem}
\label{thm:centered}
If every pin of a 50-50 Pachinko is between columns $-t$ and $t$, and $p_i$ is the probability that the ball outputs in column $i$, then $\sum_i i p_i = 0$.  
\end{theorem}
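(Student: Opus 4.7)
The plan is to prove the stronger statement that for every vertex $v$ of the Pachinko graph, the expected final output column conditional on the ball passing through $v$ equals $c_v$, the column in which $v$ sits. Applied to the input vertex (column $0$), this immediately gives $\sum_i i p_i = E[\text{final column}] = 0$.

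For the induction, let $f(v)$ denote that conditional expectation. The base case, outputs (drop sites, and rest sites if any), is trivial: $f(v) = c_v$. For the inductive step on a pin $p$, I use the 50-50 definition, which says the ball rotates around $p$ to the left or right with equal probability. The key geometric fact is that after rolling off $p$ the center of the ball shifts by exactly half a unit horizontally --- exactly one column in the half-unit numbering --- and then falling straight down preserves the column label. Hence the two out-neighbors $L(p)$ and $R(p)$ of $p$ in the Pachinko graph satisfy $c_{L(p)} = c_p - 1$ and $c_{R(p)} = c_p + 1$. Since the Pachinko graph is acyclic, I may apply the inductive hypothesis in bottom-up topological order:
\[
f(p) \;=\; \tfrac{1}{2} f(L(p)) + \tfrac{1}{2} f(R(p)) \;=\; \tfrac{1}{2}(c_p - 1) + \tfrac{1}{2}(c_p + 1) \;=\; c_p.
\]
The same calculation handles the input vertex (which has a single out-neighbor in column $0$).

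The only subtlety --- and really the only place where the 50-50 hypothesis enters --- is verifying $c_{L(p)} = c_p - 1$ and $c_{R(p)} = c_p + 1$. This uses that in the 50-50 model pins are forced to integer-column positions such that the ball can never contact two pins simultaneously, so the rolling motion completes without interference and the ball then falls down a well-defined column until the next Pachinko-graph neighbor. Once this geometric step is pinned down, the rest is essentially a martingale calculation, and the hypothesis that every pin lies between columns $-t$ and $t$ serves only to make the sum $\sum_i i p_i$ a finite, bounded sum over the (necessarily finitely many) outputs.
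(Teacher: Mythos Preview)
Your proof is correct and rests on the same martingale observation as the paper's: a ball leaving a pin in column $k$ goes to column $k-1$ or $k+1$ with equal probability, so the expected output column is unchanged. The paper packages this slightly differently, inducting on the number of pins---remove a bottommost pin, then note that re-inserting it replaces probability $p_k$ at column $k$ by $p_k/2$ at each of $k\pm1$, leaving $\sum_i i\,p_i$ unchanged---whereas you induct backward along the Pachinko graph on the conditional expectation $f(v)$. The two arguments are dual forms of the same one-line computation $\tfrac12(k-1)+\tfrac12(k+1)=k$; yours makes the conserved quantity explicit as a martingale, while the paper's avoids invoking the Pachinko graph machinery altogether and is correspondingly a bit shorter.
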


\begin{proof}
The proof is by induction on the number of pins. For 0 pins, it's clearly true. Given an $n$-pin arrangement, remove a bottommost pin to get an arrangement for which, by induction, $\sum_i i p_i = 0$. Adding a pin at the bottom of column $k$ replaces probability $p_k$ in column $k$ with probability $\frac{p_k}2$ in each of columns $k-1$ and $k+1$, changing the total by $\frac{p_k}{2}(k-1) - p_k k + \frac{p_k}{2}(k+1) = 0$, as desired.
\end{proof}

However, not every dyadic probability distribution satisfying the above condition is the set of probabilities of a 50-50 Pachinko. For instance, the two output probability distribution $\langle1/4, 0, 3/4\rangle$ is not constructible by a 50-50 Pachinko as a consequence of Theorem~\ref{thm:centered}, but it is constructible by a General or Grid Pachinko, as shown in the left diagram of Figure~\ref{fig:threefour}. It is still open whether every dyadic probability can be output by a 50-50 Pachinko. 

\begin{figure}[htbp]
\centering
\includegraphics[width=6.5 in]{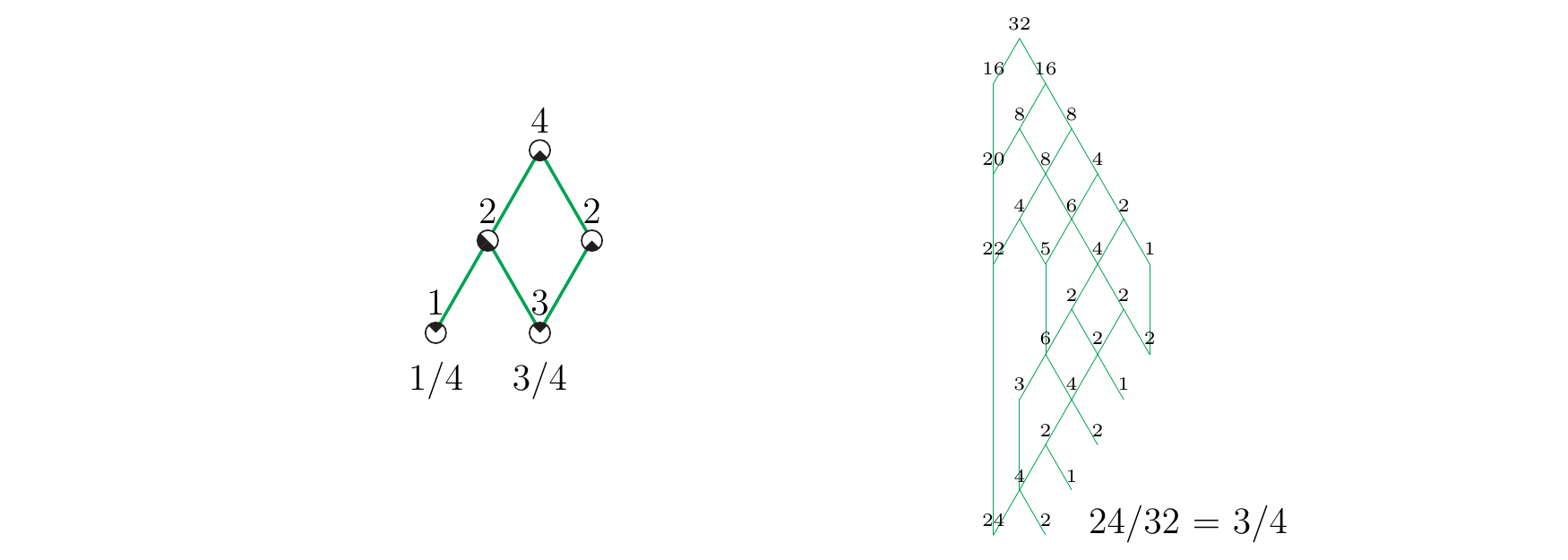}
\caption{[Left] A Grid Pachinko constructing the distribution $\langle1/4, 0, 3/4\rangle$. [Right] Constructing output probability $3/4$ with a 50-50 Pachinko.}
\label{fig:threefour}
\end{figure}

\subsection{Constructible Probabilities}

\begin{open}
  Is every dyadic rational the output probability of some 50-50 Pachinko?
\end{open}

Table~\ref{tab:exhaust} gives all dyadic probabilities
that can be output by all possible 50-50 Pachinkos with a small number of rows. 
This table was constructed by naive exponential exhaustive search. 
This data suggests that some probabilities require a large number of rows of pins to produce. 
For example, the probability $3/4$ (0.11 in binary) cannot be produced by a 50-50 Pachinko 
with fewer than nine rows of pins, but it can be produced by a 50-50 Pachinko with eleven rows; 
see the right diagram in Figure~\ref{fig:threefour}.

\begin{table}
  \centering
  \tiny
  \begin{tabular}{c|p{3in}|p{3in}}
    $n$ & Computable & Uncomputable \\ \hline
$1$ & 0, .1, 1. & none \\ \hline
$2$ & 0, .01, .1, 1. & .11 \\ \hline
$3$ & 0, .001, .01, .011, .1, .101, 1. & .11, .111 \\ \hline
$4$ & 0, .0001, .001, .0011, .01, .0101, .011, .0111, .1, .1001, .101, 1.
    & .1011, .11, .1101, .111, .1111 \\ \hline
$5$ & 0, .00001, .0001, .00011, .001, .00101, .0011, .00111, .01, .01001, .0101, .01011, .011, .01101, .0111, .01111, .1, .1001, .101, .10101, .1011, 1.
    & .10001, .10011, .10111, .11, .11001, .1101, .11011, .111, .11101, .1111, .11111 \\ \hline
$6$ & 0, .000001, .00001, .000011, .0001, .000101, .00011, .000111, .001, .001001, .00101, .001011, .0011, .001101, .00111, .001111, .01, .010001, .01001, .010011, .0101, .010101, .01011, .010111, .011, .011001, .01101, .011011, .0111, .011101, .01111, .011111, .1, .100001, .10001, .100011, .1001, .100101, .10011, .101, .101001, .10101, .101011, .1011, 1.
    & .100111, .101101, .10111, .101111, .11, .110001, .11001, .110011, .1101, .110101, .11011, .110111, .111, .111001, .11101, .111011, .1111, .111101, .11111, .111111 \\ \hline
$7$ & 0, .0000001, .000001, .0000011, .00001, .0000101, .000011, .0000111, .0001, .0001001, .000101, .0001011, .00011, .0001101, .000111, .0001111, .001, .0010001, .001001, .0010011, .00101, .0010101, .001011, .0010111, .0011, .0011001, .001101, .0011011, .00111, .0011101, .001111, .0011111, .01, .0100001, .010001, .0100011, .01001, .0100101, .010011, .0100111, .0101, .0101001, .010101, .0101011, .01011, .0101101, .010111, .0101111, .011, .0110001, .011001, .0110011, .01101, .0110101, .011011, .0110111, .0111, .0111001, .011101, .0111011, .01111, .0111101, .011111, .0111111, .1, .1000001, .100001, .1000011, .10001, .1000101, .100011, .1000111, .1001, .1001001, .100101, .10011, .101, .101001, .10101, .1010101, .101011, .1011, .1011001, .101101, .1011011, .10111, .1011101, 1.
    & .1001011, .1001101, .100111, .1001111, .1010001, .1010011, .1010111, .101111, .1011111, .11, .1100001, .110001, .1100011, .11001, .1100101, .110011, .1100111, .1101, .1101001, .110101, .1101011, .11011, .1101101, .110111, .1101111, .111, .1110001, .111001, .1110011, .11101, .1110101, .111011, .1110111, .1111, .1111001, .111101, .1111011, .11111, .1111101, .111111, .1111111 \\ \hline
$8$ & 0, .00000001, .0000001, .00000011, .000001, .00000101, .0000011, .00000111, .00001, .00001001, .0000101, .00001011, .000011, .00001101, .0000111, .00001111, .0001, .00010001, .0001001, .00010011, .000101, .00010101, .0001011, .00010111, .00011, .00011001, .0001101, .00011011, .000111, .00011101, .0001111, .00011111, .001, .00100001, .0010001, .00100011, .001001, .00100101, .0010011, .00100111, .00101, .00101001, .0010101, .00101011, .001011, .00101101, .0010111, .00101111, .0011, .00110001, .0011001, .00110011, .001101, .00110101, .0011011, .00110111, .00111, .00111001, .0011101, .00111011, .001111, .00111101, .0011111, .00111111, .01, .01000001, .0100001, .01000011, .010001, .01000101, .0100011, .01000111, .01001, .01001001, .0100101, .01001011, .010011, .01001101, .0100111, .01001111, .0101, .01010001, .0101001, .01010011, .010101, .01010101, .0101011, .01010111, .01011, .01011001, .0101101, .01011011, .010111, .01011101, .0101111, .01011111, .011, .01100001, .0110001, .01100011, .011001, .01100101, .0110011, .01100111, .01101, .01101001, .0110101, .01101011, .011011, .01101101, .0110111, .01101111, .0111, .01110001, .0111001, .01110011, .011101, .01110101, .0111011, .01110111, .01111, .01111001, .0111101, .01111011, .011111, .01111101, .0111111, .01111111, .1, .10000001, .1000001, .10000011, .100001, .10000101, .1000011, .10000111, .10001, .10001001, .1000101, .10001011, .100011, .10001101, .1000111, .10001111, .1001, .10010001, .1001001, .10010011, .100101, .10010101, .1001011, .10010111, .10011, .10011001, .1001101, .10011011, .100111, .10011101, .1001111, .101, .10100001, .101001, .10100101, .1010011, .10101, .10101001, .1010101, .10101011, .101011, .10101101, .1010111, .10101111, .1011, .10110001, .1011001, .10110011, .101101, .10110101, .1011011, .10110111, .10111, .10111001, .1011101, 1.
    & .10011111, .1010001, .10100011, .10100111, .10111011, .101111, .10111101, .1011111, .10111111, .11, .11000001, .1100001, .11000011, .110001, .11000101, .1100011, .11000111, .11001, .11001001, .1100101, .11001011, .110011, .11001101, .1100111, .11001111, .1101, .11010001, .1101001, .11010011, .110101, .11010101, .1101011, .11010111, .11011, .11011001, .1101101, .11011011, .110111, .11011101, .1101111, .11011111, .111, .11100001, .1110001, .11100011, .111001, .11100101, .1110011, .11100111, .11101, .11101001, .1110101, .11101011, .111011, .11101101, .1110111, .11101111, .1111, .11110001, .1111001, .11110011, .111101, .11110101, .1111011, .11110111, .11111, .11111001, .1111101, .11111011, .111111, .11111101, .1111111, .11111111
  \end{tabular}
  \caption{Computable and uncomputable numbers using all possible
    configurations on $n$ rows, $n \leq 8$.  The list of uncomputable numbers
    restricts to numbers with at most $n$ bits of precision, because beyond
    that nothing is computable with just $n$ rows.}
  \label{tab:exhaust}
\end{table}

\subsection{Full and Truncated Pachinkos}

While not all distributions are constructible, in this and the following sections we analyze some interesting 50-50 Pachinkos. In this section, we consider 50-50 Pachinkos containing pins in all possible locations up to row $k$ (first row with pins is row 1, row number increasing down), which we call the Full $k$-Pachinko. The output probabilities of this Pachinko constitute the $(k+1)$th row of Pascal's triangle divided by $2^k$; see left of Figure~\ref{fig:pascal}. This fact is readily apparent because the probability at each pin is the sum of the two probabilities above it divided by two.

\begin{figure}[htbp]
\centering
\includegraphics[width=6.5 in]{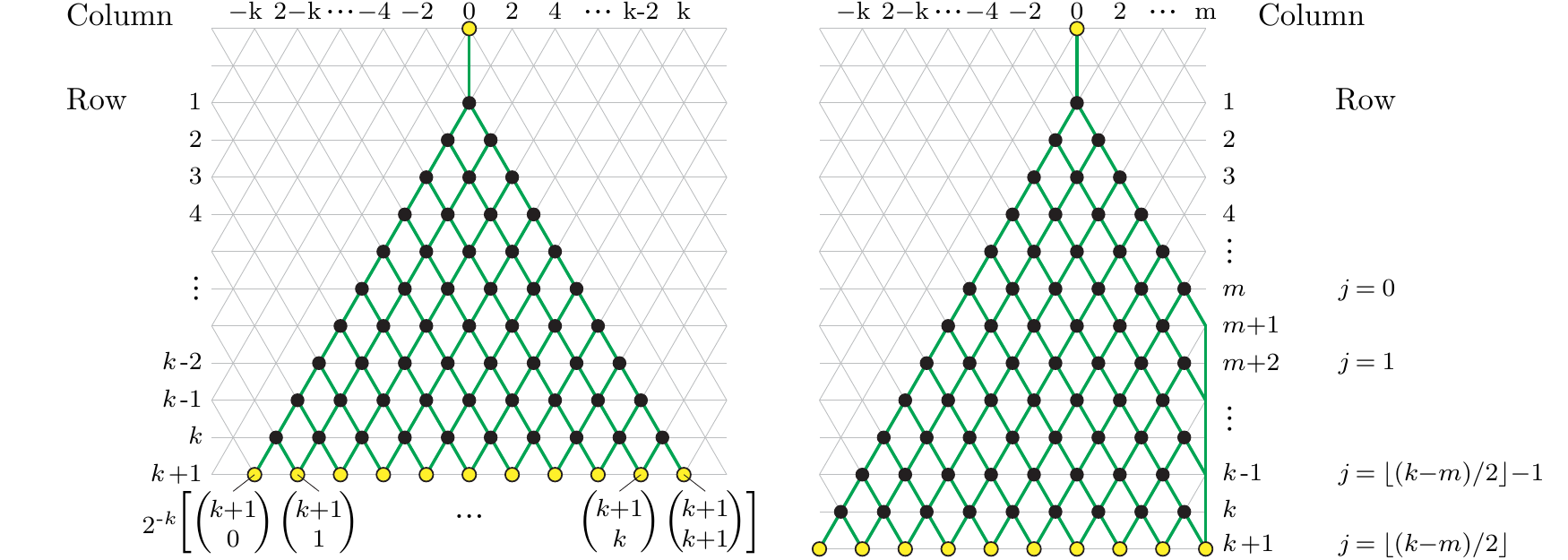}
\caption{A full $k$-Pachinko on left and an $m$-truncated $k$-Pachinko on the right.
The $k$-Pachinko outputs binomial coefficients over powers of 2, while the $m$-truncated
$k$-Pachinko outputs Ballot numbers over powers of 2 down the right side.}
\label{fig:pascal}
\end{figure}

Another interesting family of 50-50 Pachinkos arises by removing from a Full $k$-Pachinko all the pins in column $m$, which we call an $m$-truncated $k$-Pachinko; see right of Figure~\ref{fig:pascal}. Without loss of generality, we will assume $k > m > 0$, the $m$th column to the right of center. A ball may fall into column $m$ only directly after hitting some pin in column $m-1$ in row $m+2j$ for $j\in\{0,\ldots,\lfloor(k-m)/2\rfloor\}$. Let $B(m,j)$ be the number of paths an input ball starting at $(0,1)$ can take to reach pin $(m-1,m+2j)$. Then the probability of a ball falling into column $m$ after hitting pin $(m-1,m+2j)$ is then $B(m,j)/2^{m+2j}$. We observe that every path in the pachinko from $(0,1)$ to $(m-1,m+2j)$ must hit either the pin at $(-1,2)$ or the pin at $(1,2)$, leading to the recurrence $B(m,j) = B(m-1,j) + B(m+1,j-1)$; see Figure~\ref{fig:recurse}. The numbers that satisfy this recurrence are the Ballot numbers~\cite{Graham}, with:

\begin{equation}
B(m,j) = \frac{m}{2j+m}{2j+m\choose j}.
\end{equation}

\begin{figure}[htbp]
\centering
\includegraphics[width=6.5 in]{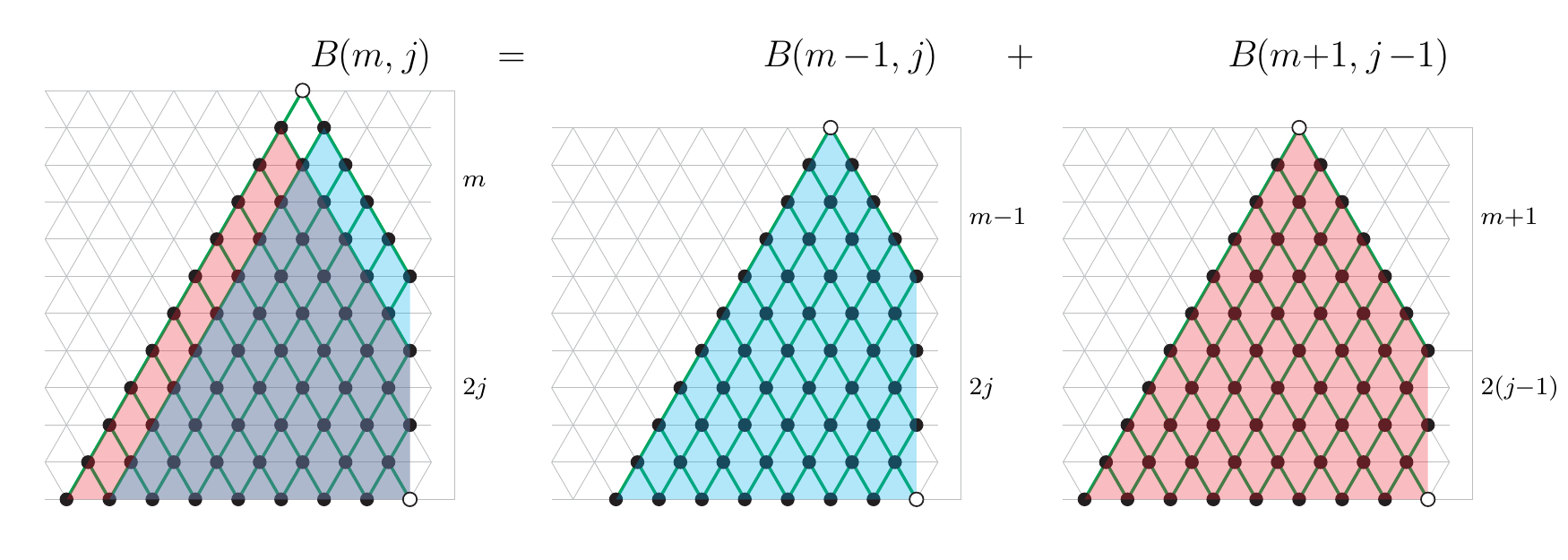}
\caption{Ballot Number recurrence showing that $B(m,j) = B(m-1,j) + B(m+1,j-1)$.}
\label{fig:recurse}
\end{figure}

Ballot numbers are a generalization of the Catalan numbers, the special case for $m = 1$. Ballot numbers have the following generating function~\cite{Graham}:

\begin{equation}
\left(\frac{1-\sqrt{1-4x}}{2x}\right)^m = \sum\limits_{j=0}^\infty B(m,j)x^j.
\end{equation} 

We can use the properties of these numbers to prove two useful Lemmas which we use in Section~\ref{sec:general} to construct Pachinkos that can approximate any probability distribution.

\begin{lemma}
\label{lem:one}
The output probability in column $m$ of an $m$-truncated $k$-Pachinko approaches 1 as $k\rightarrow\infty$.
\end{lemma}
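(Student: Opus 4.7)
The plan is to write the column-$m$ output probability as a partial sum of the Ballot-number series and then evaluate the full series using the supplied generating function at the boundary of its disk of convergence.

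First I would name the quantity to be bounded. Let $P_k$ be the probability that a ball output in column $m$ of the $m$-truncated $k$-Pachinko. The discussion preceding the lemma already establishes that a ball enters column $m$ exactly when it hits pin $(m-1,m+2j)$ for some admissible $j$, and that each such event contributes probability $B(m,j)/2^{m+2j}$; these events are disjoint (once the ball crosses into column $m$, there are no pins there to hit). Hence
\begin{equation}
P_k \;=\; \sum_{j=0}^{\lfloor (k-m)/2 \rfloor} \frac{B(m,j)}{2^{m+2j}}.
\end{equation}

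Next I would evaluate the corresponding infinite series. Substituting $x=1/4$ into the generating function
\begin{equation}
\left(\frac{1-\sqrt{1-4x}}{2x}\right)^m = \sum_{j=0}^{\infty} B(m,j)\, x^j
\end{equation}
gives $\sum_{j=0}^{\infty} B(m,j)/4^j = 2^m$, so
\begin{equation}
\sum_{j=0}^{\infty} \frac{B(m,j)}{2^{m+2j}} \;=\; \frac{1}{2^m}\sum_{j=0}^{\infty}\frac{B(m,j)}{4^j} \;=\; \frac{2^m}{2^m} \;=\; 1.
\end{equation}

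Since every term in the series is nonnegative, the partial sums $P_k$ form a monotonically nondecreasing sequence bounded above by $1$; as $k\to\infty$ the upper summation index $\lfloor(k-m)/2\rfloor$ also tends to infinity, so $P_k\to 1$, proving the lemma.

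The only real subtlety is justifying that the generating function identity may be applied at $x=1/4$, which sits on the boundary of its radius of convergence: here it is enough to observe that all coefficients are nonnegative and that the sum diverges for $x>1/4$, so by Abel's theorem (or simply by monotone convergence of the partial sums) the value at $x=1/4$ equals the limit from below, which the closed form gives as $2^m$. Everything else is routine.
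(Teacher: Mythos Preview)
Your proposal is correct and follows essentially the same approach as the paper: both substitute $x=1/4$ into the Ballot-number generating function to show the full series sums to~$1$. Your version is in fact more carefully written, as you make explicit that the column-$m$ probability is a partial sum with nonnegative terms (hence converges to the full sum) and you address the boundary-of-convergence issue via Abel/monotone convergence, whereas the paper's proof consists solely of the one-line computation of the infinite sum.
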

\begin{proof}
\vspace{-1pc}
\begin{equation}
\sum\limits_{j=0}^\infty \frac{B(m,j)}{2^{m+2j}} = \frac{1}{2^m}\sum\limits_{j=0}^\infty B(m,j)\frac{1}{4^j} = \frac{1}{2^m}\left(\frac{1-\sqrt{1-4(1/4)}}{2(1/4)}\right)^m = 1
\end{equation}
\end{proof}
\begin{lemma}
\label{lem:exists}
There exists a finite Pachinko that outputs a ball in column $m$ with probability greater than $1-\varepsilon$ for any $\varepsilon > 0$. 
\end{lemma}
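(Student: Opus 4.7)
The plan is to invoke Lemma~\ref{lem:one} directly and extract a finite witness from the convergence statement. The $m$-truncated $k$-Pachinko is finite for any finite $k$, so the only thing to check is that a sufficiently large (but finite) $k$ suffices to make the probability mass that exits in column $m$ exceed $1-\varepsilon$.

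First I would observe that for a fixed $m$-truncated $k$-Pachinko, the probability that a ball exits in column $m$ is exactly the partial sum
\[
P_k \;=\; \sum_{j=0}^{\lfloor (k-m)/2 \rfloor} \frac{B(m,j)}{2^{m+2j}},
\]
since the only way for a ball to drop into column $m$ is to hit a pin at $(m-1, m+2j)$ for some $j$ in this range, and such contributions are disjoint (each corresponds to a distinct row on which the ball first enters column $m$).

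Next, by Lemma~\ref{lem:one}, the infinite-series limit $\lim_{k\to\infty} P_k$ equals $1$. By the definition of a limit of a convergent series with nonnegative terms, the partial sums $P_k$ are monotone nondecreasing and approach $1$, so for any $\varepsilon > 0$ there exists a finite $k(\varepsilon)$ with $P_{k(\varepsilon)} > 1-\varepsilon$. Taking the $m$-truncated $k(\varepsilon)$-Pachinko gives the required finite construction.

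I do not foresee a real obstacle; this is essentially a restatement of Lemma~\ref{lem:one} phrased as an ``$\varepsilon$-$k$'' statement rather than a limit. The only mildly nontrivial point is to note that the partial sum $P_k$ genuinely equals the probability of exiting in column $m$ in the truncated construction (i.e.\ no probability is ``lost'' elsewhere or double-counted), which follows from the observation made just before Lemma~\ref{lem:one}: a ball falls into column $m$ for the first time exactly upon hitting some pin at $(m-1, m+2j)$, and these events partition the event ``ball exits in column $m$.''
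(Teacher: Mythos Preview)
Your proof is correct and follows essentially the same approach as the paper: both invoke Lemma~\ref{lem:one} (the infinite series sums to $1$) and extract a finite truncation whose partial sum exceeds $1-\varepsilon$. The paper's version is slightly more explicit---it uses Stirling's approximation to show the terms $B(m,j)/2^{m+2j}$ tend to zero, picks a specific $j$ where the next term is below $\varepsilon$, and bounds the tail---whereas you appeal directly to the definition of convergence of a nonnegative series, which is cleaner and sidesteps the asymptotic estimate entirely.
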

\begin{proof}
Consider an $m$-truncated $(m+2j)$-Pachinko for $j$ such that $B(m,j+1)/2^{m+2(j+1)}~\leq~\varepsilon$, which exists because $B(m,j)$ is always positive and $\lim_{j \rightarrow\infty} B(m,j)/4^j = 0$ (easily derived from Stirling's approximation for factorials~\cite{abramowitz1966handbook}). Then Lemma~\ref{lem:one} yields the following lower bound on the output probability in column $m$:
\begin{equation}
\sum\limits_{i = 0}^{j}\frac{B(m,i)}{2^{m+2i}} = \sum\limits_{i = 0}^{\infty}\frac{B(m,i)}{2^{m+2i}} - \sum\limits_{i = j+1}^{\infty}\frac{B(m,i)}{2^{m+2i}} > 1 - \frac{B(m,j+1)}{2^{m+2(j+1)}}\geq 1-\varepsilon.
\end{equation} 
\end{proof}

\subsection{3 or Fewer Outputs}

For a small number of outputs, constructible output probability distributions of 50-50 Pachinkos can be calculated directly. The only possible outputs for 50-50 Pachinkos with one or two outputs are $\langle1\rangle$ and $\langle1/2,0,1/2\rangle$ respectively. 50-50 Pachinkos with exactly three outputs can output in either four or five columns by Lemma~\ref{lem:span}. For four columns, pins only exist in two columns, and hit pins ordered by height must alternate between the two columns as shown on the left of Figure~\ref{fig:few}. Such a Pachinko with $k$ hit pins results in the following distributions or their reversals:

\begin{align}
\left\langle \sum\limits_{i=1}^{j+1} 2^{-2i+1},0,2^{-k},\sum\limits_{i=1}^j 2^{-2i}\right\rangle&\quad\textnormal{for }k = 2j+1, \\
\left\langle \sum\limits_{i=1}^j 2^{-2i+1},2^{-k},0,\sum\limits_{i=1}^j 2^{-2i}\right\rangle&\quad\textnormal{for }k = 2j.
\end{align}

 50-50 Pachinkos with exactly three outputs in five columns can be represented implicitly by products of linear transformations on a 5-vector encoding the output probability in each column. The ball input is represented by a starting vector chosen from the set
 \begin{equation}
 \mathcal{X} = \left\{\langle 1/2,0,1/2,0,0\rangle^T, \langle 0,0,1,0,0\rangle^T, \langle 0,0,1/2,0,1/2\rangle^T\right\},
 \end{equation} 
 which correspond to a ball starting in column two, three, or four respectively. Then the ball may interact with a sequence of one or more pin arrangements in any of the three patterns shown in the center of Figure~\ref{fig:few}. 
 
\begin{figure}[htbp]
\centering
\includegraphics[width=6.5 in]{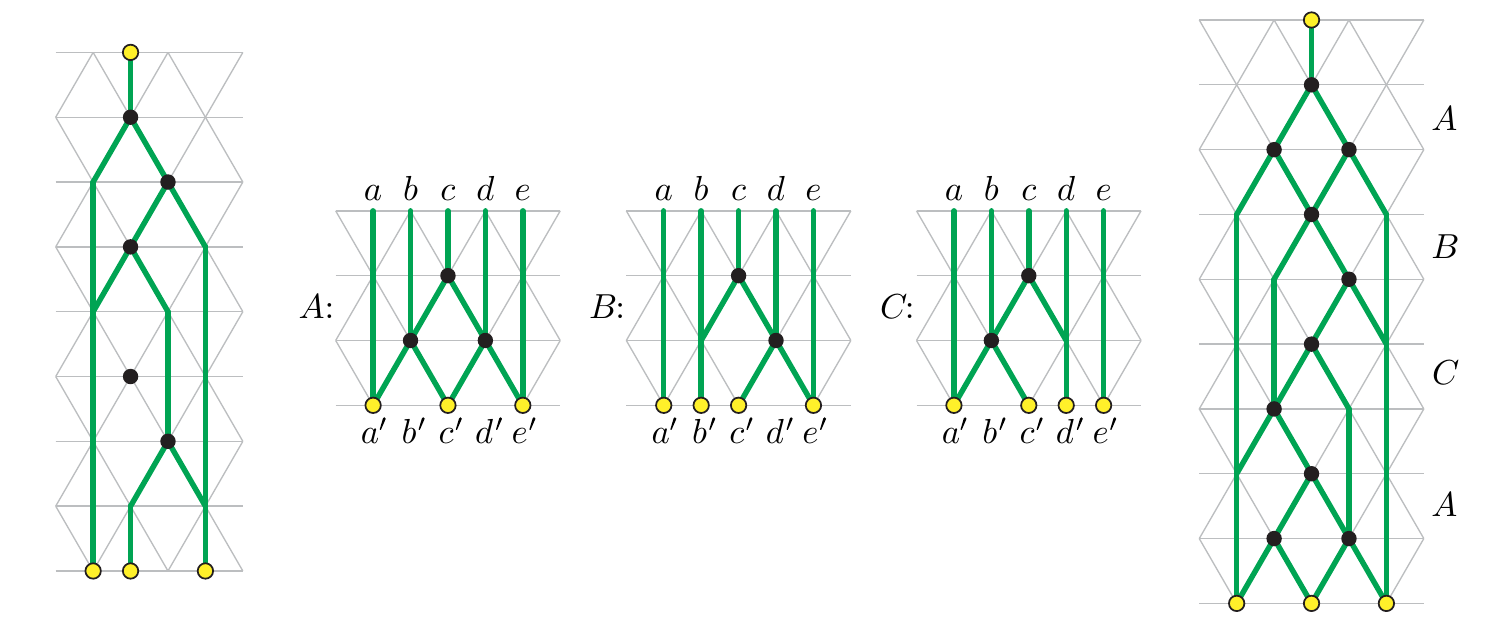}
\caption{[Left] Example of a four column Pachinko with three nonzero outputs. 
[Center] The three pin arrangement patterns for creating five column Pachinkos
with three nonzero outputs. 
For pattern $A$, $(a',b',c',d',e') = (a+b/2+c/4,0,b/2+c/2+d/2,0,c/4+d/2+e)$. 
For pattern $B$, $(a',b',c',d',e') = (a,b+c/2,c/4+d/2,0,c/4+d/2+e)$. 
For pattern $C$, $(a',b',c',d',e') = (a+b/2+c/4,0,b/2+c/4,c/2+d,e)$. 
[Right] Example of a five column Pachinko with
three nonzero outputs.}
\label{fig:few}
\end{figure}
 
These patterns correspond respectively to the following three linear transformations: 
 
 \begin{equation}
 \small
A = 
 \left(
 \begin{array}{ccccc}
 1 & 1/2 & 1/4 & 0 & 0\\
 0 & 0 & 0 & 0 & 0 \\
 0 & 1/2 & 1/2 & 1/2 & 0\\
  0 & 0 & 0 & 0 & 0 \\
  0 & 0 & 1/4 & 1/2 & 1
\end{array}
 \right), B = 
  \left(
 \begin{array}{ccccc}
 1 & 0 & 0 & 0 & 0\\
 0 & 1 & 1/2 & 0 & 0 \\
 0 & 0 & 1/4 & 1/2 & 0\\
  0 & 0 & 0 & 0 & 0 \\
  0 & 0 & 1/4 & 1/2 & 1
\end{array}
 \right), C = 
  \left(
 \begin{array}{ccccc}
 1 & 1/2 & 1/4 & 0 & 0\\
 0 & 0 & 0 & 0 & 0 \\
 0 & 1/2 & 1/4 & 0 & 0\\
  0 & 0 & 1/2 & 1 & 0 \\
 0 & 0 & 0 & 0 & 1
\end{array}
 \right).
 \end{equation}
 
Assuming that the nonzero probability outputs in columns one and five, the sequence of pins may not end with a pin in column three, or with configuration $B$ or $C$, or else the Pachinko would have four outputs. So the pins must end with an $A$ configuration. Thus, the producible output distributions  are exactly the distributions that can be written in the following form, for a sequence of configurations $T_i \in \{A, B, C\}$ and a ball input vector $x\in \mathcal{X}$:
 
 \begin{equation}
 A\left(\prod\limits_{i = 1}^k  T_i\right)x.
 \end{equation}
 An example of one such Pachinko is shown on the right of Figure~\ref{fig:few}.

\subsection{Uniform Distributions}

\begin{figure}[h!]
\centering
\includegraphics[width=5 in]{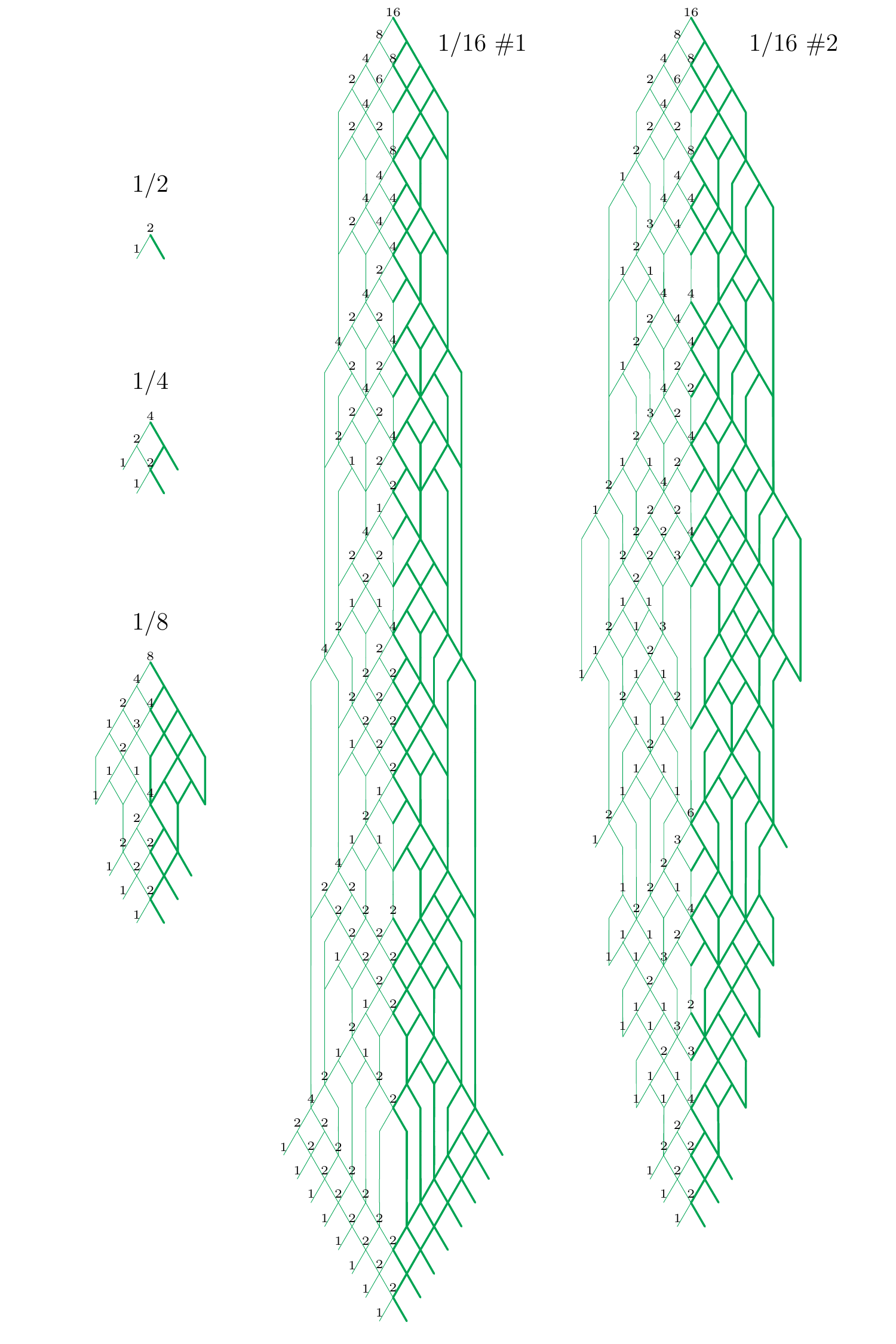}
\caption{50-50 Grid Pachinkos that construct uniform distributions for 2, 4, 8, and 16 outputs, two provided for the last.
Pins are at each vertex with out-degree (down-degree) two. The left half of each tracks how probability
mass travels through the distribution. The probabilities are the numbers shown divided by the number of outputs.}
\label{fig:uniform}
\end{figure}

Uniform distributions are another special class of outputs that one might like to construct under this model, particularly for liquid distribution applications. Figure~\ref{fig:uniform} shows 50-50 Pachinkos that output uniform distributions of $1/2$, $1/4$, $1/8$, and $1/16$ respectively. Two different 50-50 Pachinkos with a uniform distribution of $1/16$ are given, demonstrating that different distributions of pins can yield the same output. It is still open if all uniform distributions of probabilities of the form $1/2^k$ are constructible by 50-50 Pachinkos.

\begin{open}
  Is every uniform distribution of output probabilities of the form $1/2^k$ constructible by a 50-50 Pachinko?
\end{open}

\subsection{Any Probability Distribution, Shifted}

While one cannot output every dyadic probability distribution by itself using a 50-50 Pachinko, one can output a multiple of any probability distribution, in addition to some other probabilities. 


%

\begin{theorem}
For any ordered set of dyadic probabilities $p_1$, $p_2$, \ldots, $p_n$ there is a positive constant $c$ and a 50-50 Pachinko  for which the probabilities for $n$ of its outputs, in order from left to right, are $cp_1$, $cp_2$, \ldots, $cp_n$. 
\end{theorem}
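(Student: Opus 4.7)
My plan is constructive: write each target as $p_i = a_i/2^K$ with common dyadic denominator, and fix a scale factor $c = 1/2^L$ where $L$ will be chosen large enough below. The goal becomes producing exactly $cp_i = a_i/2^{K+L}$ at column $x_i$ for each $i$, with the remaining $1-c$ of the input mass distributed as garbage among other output columns. The 50-50 Pachinko I build will consist of two conceptual layers: a \emph{splitter} at the top that breaks the unit input mass into a rich multiset of small dyadic ``pellets'' at controlled source columns, and a trivial \emph{routing} layer below in which a selected subset of pellets falls straight down into each target column, their masses summing to exactly $cp_i$.

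The splitter is built out of diagonal chains of 50-50 pins: a chain of pins at positions $(s, r), (s+1, r+1), (s+2, r+2), \ldots$ (or its mirror image) takes a mass $\mu$ entering at its top and peels off streams of dyadic masses $\mu/2, \mu/4, \ldots, \mu/2^D$ plus a leftover $\mu/2^D$ at $D+1$ distinct drop columns. By nesting such chains --- placing further chains below the drop columns of an outer chain --- I can produce a multiset of dyadic pellets whose masses include each power $1/2^j$ (for $j$ up to $K+L$) with whatever multiplicity is needed, at source columns I can choose. For each target column $x_i$, the binary expansion $a_i = \sum_j b_{ij} 2^j$ prescribes which pellet masses must be combined: I design the nested chains so that exactly one pellet of each required mass $1/2^{K+L-j}$ (for bits $b_{ij} = 1$) arrives directly above $x_i$, and the routing layer simply lets these pellets fall into $x_i$ where their masses sum to $a_i/2^{K+L} = cp_i$.

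The main obstacle is the simultaneous combinatorial and geometric design of the splitter so that the pellets needed by different targets do not interfere. I would resolve this by partitioning the plane into $n$ disjoint vertical corridors, one per target $x_i$, and placing an independent nested-chain sub-splitter in each corridor. A distributor at the top --- itself a small diagonal chain --- sends a dyadic fraction $c_i$ of the input mass into corridor $i$, where each $c_i$ is chosen large enough that corridor $i$'s sub-splitter can produce the pellet masses required for $cp_i$. Because $L$ is free, the slack $c = 1/2^L$ absorbs both the halving losses from any 50-50 deflections used to steer pellets into their correct source columns and the mass allocated to corridors that ends up as garbage. Finally, the center-of-mass constraint of Theorem~\ref{thm:centered} is satisfied automatically: the garbage pellets can be placed on the appropriate side of column zero to cancel any first-moment discrepancy introduced by the $cp_i$, and we have ample garbage mass ($1 - c = 1 - 1/2^L$) with which to do so.
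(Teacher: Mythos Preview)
Your plan has a genuine gap at the step you yourself flag as ``the main obstacle.'' You assert that ``I design the nested chains so that exactly one pellet of each required mass $1/2^{K+L-j}$ (for bits $b_{ij}=1$) arrives directly above $x_i$,'' but you never explain how. In the 50-50 model there is no routing: every pin halves its stream, and the other half lands at a specific neighbouring column that you do not get to choose. A single diagonal chain produces pellets $\mu/2,\mu/4,\ldots$ at \emph{distinct} consecutive columns, not at one column; any attempt to push a pellet of mass $1/2^{j}$ over by $d$ columns turns it into $1/2^{j+d}$, so shoving several pellets toward a common $x_i$ collapses their masses rather than preserving the intended bit pattern. Your corridor decomposition separates \emph{different} targets from one another, but does nothing to solve the within-corridor problem of depositing several prescribed dyadic masses at a \emph{single} column while keeping all the halved-off garbage away from that column. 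That within-corridor problem is essentially Open Problem~1 (``is every dyadic rational an output probability of some 50-50 Pachinko?'') with an extra slack parameter~$L$; you have not shown that the slack suffices.

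The paper's proof avoids this difficulty by a completely different mechanism. Two stacked diagonals produce, at the $k$th position of the second diagonal, probability $k\cdot 2^{-k}$ --- an \emph{integer} times a power of~$\tfrac12$, not just a power of~$\tfrac12$. Three further pins at positions $q_i-1,q_i,q_i+1$ then yield $3q_i\cdot 2^{-2-q_i}$, and by choosing $c$ with a factor of~$3$ in the numerator and spacing the $q_i$'s apart, every target $cp_i$ is realised directly as a single stream, with no bit-by-bit recombination needed. This is the idea your sketch is missing.

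A smaller point: your final paragraph misreads Theorem~\ref{thm:centered}. That identity $\sum_i i p_i = 0$ holds automatically for \emph{every} 50-50 Pachinko; it is a constraint, not a condition you engineer by ``placing garbage pellets on the appropriate side.'' You have no control over where garbage lands beyond what the pin placements force.
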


\begin{proof}
Let $m$ be the maximum of the numerators of the $p_i$s. We will construct an example with $c = \frac{3}{2^{2^{m+2n+10}}}$. 

For each $i\in\{1,\ldots,n\}$, let $cp_i = 3q'_i/r'_i$ in lowest terms.  For each $i\in\{1,\ldots,n\}$, $q'_i \le m < 2^{m+2i+4}$, so by doubling numerators and denominators as necessary, let $cp_i = 3q_i/r_i$ where $r_i$ is a power of 2 and $2^{m+2i+4} < q_i \le 2^{m+2i+5}$. That is, we put the numerators in increasing order and spaced out from each other and from 1.

We create two long diagonals of Pachinko pins as shown on the left of Figure~\ref{fig:shift}.

\begin{figure}[hpbt]
\centering
\includegraphics[width=6 in]{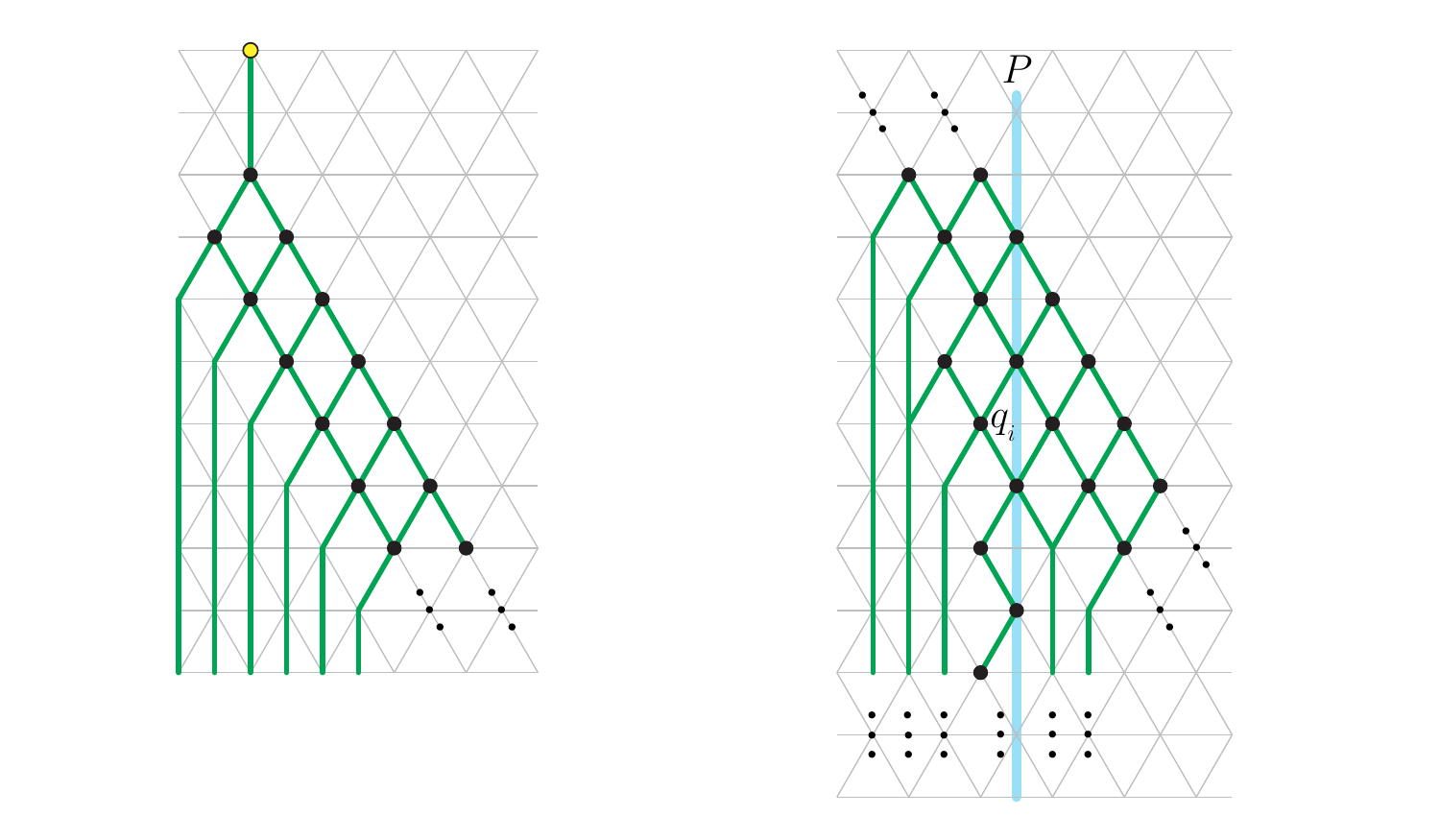}
\caption{Details of a Pachinko to construct a shifted distribution.
[Left] Top are consisting of two diagonal rows of pins. 
[Right] The vicinity of an output column $P$. }
\label{fig:shift}
\end{figure}

The probability that a ball hits the $k$th pin in the first diagonal is $2^{1-k}$, and the probability that a ball hits the $k$th pin in the second diagonal is $k2^{-k}$. 

For each $i$, put three pins in the third diagonal in spots $q_i - 1$, $q_i$, and $q_i + 1$, so the probabilities that balls hit them are $(q_i - 1)2^{-q_i}$, $(2q_i - 1)2^{-1-q_i}$, and $3q_i2^{-2-q_i}$, respectively. Since $q_{i-1} < q_i - 3$ and $q_{i+1} > q_i + 3$, they don't interfere.

Finally, the left output of the last of those three pins, $P$ (which is hit with probability $3q_i2^{-2-q_i}$), goes to a spot where no other output goes, so by alternating between pins in that column and $P$'s column, we can get a probability of the form $3q_i 2^{-2-q_i-t}$ for any $t$. In particular, since $2 + q_i \le 2^{m+2i+5} + 2 < 2^{m+2n+10}$, we can get $\frac{3q_i 2^{-2^{m+2n+10}}}{r_i} = cp_i$, as desired.
\end{proof}

\subsection{Approximating Any Probability Distribution}

While outputting any single dyadic probability with 50-50 Pachinkos remains open, one can approximate any real probability distribution to arbitrary precision. 

\begin{theorem}
For any real probability $p$ and any $\varepsilon > 0$, there exists a 50-50 Pachinko with a finite number of pins that outputs a probability in $(p - \varepsilon, p + \varepsilon)$.
\end{theorem}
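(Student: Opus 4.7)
The plan is to approximate $p$ by a dyadic rational $p_K$ and then build a 50-50 Pachinko whose probability in some output column $T$ is close to $p_K$, using Lemma~\ref{lem:exists} as a routing primitive. Choose $K$ with $2^{-K} < \varepsilon/2$ and write $p_K = \sum_{i \in S} 2^{-i}$ for $S \subseteq \{1,\dots,K\}$; then $|p - p_K| < \varepsilon/2$, and it suffices to hit column $T$ with probability within $\varepsilon/2$ of $p_K$.

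First I would build a diagonal of $K$ 50-50 pins that generates probability $2^{-i}$ at a dedicated source column $c_i$ for each $i = 1,\dots,K$, with the remaining $2^{-K}$ collected at an extra column $c_{K+1}$; the $c_i$ can be placed at whatever widely spaced horizontal positions the routing below requires. Then, for each $i \in S$, I would invoke Lemma~\ref{lem:exists} with tolerance $\delta < \varepsilon/(2|S|)$ to obtain an $m_i$-truncated $k_i$-Pachinko $P_i$ that, placed with its input aligned at $c_i$ and its missing column aligned at $T$, routes at least a $(1-\delta)$-fraction of the probability entering at $c_i$ into column $T$. Because every $P_i$ has no pins in column $T$ by construction, any ball that arrives in column $T$, whether from $P_i$ itself or from a higher routing Pachinko, falls unobstructed through all subsequent Pachinkos to the single drop site at $T$, so the contributions from different $i \in S$ accumulate additively at $T$. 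Summing gives probability at $T$ within $\varepsilon/2$ of $p_K$, and therefore within $\varepsilon$ of $p$.

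The hard part will be arranging the $P_i$'s without interference on the \emph{other} source columns: the symmetric triangular footprint of $P_i$ has width $O(k_i)$, which by Lemma~\ref{lem:exists} scales roughly with $|c_i - T|$, and this footprint generally overlaps other source columns $c_j$, where the ball flowing down $c_j$ would be scrambled and could leak additional mass into $T$ in an uncontrolled way. I would handle this by spacing the $c_i$'s super-exponentially away from $T$ and stacking the $P_i$'s in order of increasing $|c_i - T|$, so that by the time $P_i$ acts, the only source it intersects which has not yet been drained by a higher routing Pachinko is $c_i$ itself; alternatively, each $P_i$ can be replaced by a cascade of narrow truncated sub-Pachinkos of logarithmic width that step the probability toward $T$ in small jumps fitting inside the buffer between adjacent sources. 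Either refinement keeps the total Pachinko finite while achieving the required error bound.
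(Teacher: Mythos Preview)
Your approach has a genuine gap at the point you flag as ``the hard part,'' and the patches you sketch do not close it. The issue is the sentence ``the $c_i$ can be placed at whatever widely spaced horizontal positions the routing below requires.'' In the 50-50 model this is false: a diagonal of $K$ pins necessarily deposits the streams $2^{-1},2^{-2},\dots$ in \emph{adjacent} columns, and there is no 50-50 primitive that simply translates a stream sideways. The only translation device available is exactly Lemma~\ref{lem:exists}, whose footprint is at least as wide as the distance moved (indeed, by Theorem~\ref{thm:centered} any 50-50 gadget that delivers mass $1-\delta$ to column $m$ must also emit compensating mass at average column $\approx -m(1-\delta)/\delta$, forcing pins far on the opposite side). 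So you cannot first spread the $c_i$ out and then route: the spreading step already requires routing whose footprint swallows the other sources. Your ``process in order of increasing $|c_i-T|$'' fix assumes the spacing is already there; your ``cascade of narrow sub-Pachinkos'' fix fails because each narrow step has error bounded below by a constant (e.g.\ a width-$3$ step loses $3/8$), and shrinking the per-step error forces the width back up.

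The paper's proof sidesteps all of this with a completely different idea. Rather than a dyadic decomposition plus many routing steps, it approximates $p$ by an arbitrary \emph{rational} $m/n$ and builds a \emph{single} Pachinko: an $m$-truncated Pachinko with the pins in column $m-n$ additionally removed. Almost all the mass then lands in columns $m$ and $m-n$, and the first-moment invariant $\sum_i i\,p_i=0$ of Theorem~\ref{thm:centered} pins down the split between them, forcing $p_{m-n}\approx m/n\approx p$. The key insight you are missing is that the moment constraint lets the \emph{positions} of two sink columns encode an arbitrary ratio, so one routing primitive suffices and no bit-by-bit assembly (with its attendant interference) is needed.
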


\begin{proof}
Let $m$ and $n$ be integers such that $|\frac{m}{n} - p| \le \frac{\epsilon}3$. Without loss of generality assume $n > m > 0$.

By Lemma~\ref{lem:one}, there exists a finite arrangement of pins that achieves probability at least $1 - \frac{\varepsilon}{3\max(|m|,|m-n|)}$ in column $m$, leaving at most $\frac{\varepsilon}{3\max(|m|,|m-n|)}$ in all other columns combined, and in particular at most $\frac{\varepsilon}{3\max(|m|,|m-n|)}$ between columns $m-n$ and $m$, not inclusive. Remove all pins in column $m-n$ from that arrangement. Doing this does not increase the probability that a ball ends between columns $m-n$ and $m$, exclusive, since it affects only balls that make it to column $m-n$, and decreases the probability that such a ball ends between columns $m-n$ and $m$ to zero. So the ball ends between $m-n$ and $m$, exclusive, with probability at most $\frac{\varepsilon}{3\max(|m|,|m-n|)}$, and otherwise ends in column $m-n$ or $m$. But the first moment is 0, so if $p_i$ is the probability that the ball ends in column $i$, then $\sum i p_i = 0$, so $|mp_m + (m-n)p_{m-n}| = |\sum_{i \in (m-n,m)} ip_i| < \max(|m|,|m-n|)\frac{\varepsilon}{3\max(|m|,|m-n|)} = \frac{\varepsilon}{3}$. Also, the total probability is 1, so $|p_m + p_{m-n} - 1| \le \frac{\varepsilon}{3\max(|m|,|m-n|)} \le \frac{\varepsilon}{3m}$. By the triangle inequality,  $|mp_m + (m-n)p_{m-n}| = |m(p_m + p_{m-n}) - np_{m-n}| \ge |m - np_{m-n}| - \frac{\varepsilon}{3}$, so $|m - np_{m-n}| \le \frac{2 \varepsilon}{3}$, so $|\frac{m}{n} - p_{m-n}| \le \frac{2\varepsilon}{3}$, so by the triangle inequality again, $|p - p_{m-n}| \le \varepsilon$, as desired.
\end{proof}

\begin{theorem}
For any finite ordered set of real probabilities $(p_1,\ldots,p_n)$ and
any $\varepsilon > 0$, there exists a 50-50 Pachinko with a finite
number of pins for which the probabilities for $n$ of its consecutive
outputs, in order from left to right, are in $(p_i - \varepsilon, p_i + \varepsilon)$ for $i\in \{1,\dots,n\}$ .
\end{theorem}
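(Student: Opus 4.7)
The plan is to prove this by induction on $n$, recursively using the previous single-probability theorem as a two-way splitter gadget. The base case $n = 1$ is the previous theorem applied directly.

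For the inductive step ($n \ge 2$), let $S = \sum_{i=1}^{n-1} p_i$ (without loss of generality $S > 0$, else the base case suffices), and set a working tolerance $\delta = \varepsilon/4$. First, I would invoke the inductive hypothesis to obtain a 50-50 Pachinko $P'$ approximating the conditional distribution $(p_1/S, \ldots, p_{n-1}/S)$ with per-output tolerance $\delta$; let $W$ denote its horizontal span. Next, I would apply the previous theorem to build a splitter $P_1$ with target $p = S$ and tolerance $\delta$, with the additional constraint that the approximating rational $m/n'$ chosen in that proof has $n' > W$. This is always achievable since $m$ and $n'$ can be scaled arbitrarily large while keeping $m/n' \approx S$. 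By the previous proof, $P_1$ produces probability $\tilde{S} := S \pm \delta$ at a column $L$ and probability $\tilde{p}_n := p_n \pm \delta$ at column $R = L + n'$, with total residual leakage at most $\delta$ scattered in columns strictly between $L$ and $R$.

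I would then translate $P'$ so its input column coincides with $L$ and place it below $P_1$. Because $W < n' = R - L$, the pins of $P'$ lie strictly to the left of column $R$, and therefore never perturb the probability at the $p_n$ output. The $n$ designated outputs, read left to right, are the $n-1$ main outputs of $P'$ followed by the output at column $R$.

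For the error analysis, I would bound three contributions at each recursive level. First, the splitter's error $\eta$ in $\tilde{S}$ propagates through $P'$ linearly, contributing at most $|\eta|\,q_i \le \delta$ to the $i$th sub-output, where $q_i = p_i/S$. Second, $P'$'s intrinsic per-output error (at most $\delta$ when driven by input $1$) scales by $\tilde{S} \le 1$, contributing at most $\delta$ more. Third, the splitter's leakage falling into $P'$'s pin region may be redistributed to the sub-outputs, but since the total leakage is at most $\delta$, this contamination contributes at most $\delta$ per output. Summing gives per-output error at most $3\delta < \varepsilon$ at each main sub-output, while the output at $R$ incurs error at most $\delta < \varepsilon$ from the splitter alone. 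The hard part will be controlling this third term: the leakage falling between $L$ and $R$ may hit pins of $P'$ and get redistributed in complicated ways, but the total-probability bound on the leakage lets us sidestep any detailed analysis of its fate. A secondary concern is ensuring $P'$ fits geometrically below $L$ without reaching column $R$; this is handled cleanly by the choice $n' > W$.
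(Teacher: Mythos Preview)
Your inductive skeleton is exactly the paper's: split off $p_n$ with the single-probability theorem, recurse on the conditional distribution $(p_1/S,\ldots,p_{n-1}/S)$, and nest the recursive construction under the left output of the splitter, having chosen the splitter wide enough that the right output is untouched. Your three-term error budget is also correct.

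The gap is that you never establish that your $n$ designated outputs are \emph{consecutive}. The splitter's leakage (total at most $\delta$) lands in columns strictly between $L$ and $R$, but $P'$ only occupies columns within $W<n'$ of $L$. Any leakage that falls in the window between $L+W$ and $R$ passes through no pins of $P'$ and becomes a drop-site output of the combined Pachinko. Those tiny outputs sit between the $(n-1)$th main output of $P'$ and your $n$th designated output at column $R$, so the $n$ designated outputs are not consecutive in the ordered list of all outputs. Your third error term correctly bounds the \emph{mass} of this leakage, but the theorem's ``consecutive'' clause is a structural requirement, and even an arbitrarily small intervening output violates it.

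The paper closes exactly this gap with one additional step you are missing: after building $P'$, it invokes Lemma~\ref{lem:exists} (the truncated-Pachinko lemma) once more to sweep the probability sitting at the far-right column leftward into the column immediately to the right of the $p_{n-1}$ output. The residual from that sweep lands still farther right, so nothing small is inserted between $p_{n-1}$ and the relocated $p_n$, and the $n$ main outputs end up consecutive. Adding this consolidation step (with its own $\varepsilon/3$ budget) to your argument would complete it.
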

\begin{proof}
Note that the previous
proof proves the stronger statement that for any real probability $p$,
any $\varepsilon > 0$, and sufficiently large $n$, there exists a
50-50 Pachinko with a finite number of pins that takes probability 1 at
column $i$ to a probability distribution within statistical distance\footnote{
By \emph{statistical distance}, we mean the maximum absolute difference between
respective probabilities of the corresponding distribution.
}
$\varepsilon$ of probability $p$ at column $i-n$ and probability $1-p$
at some column $j \ge i+n\frac{p}{1-p}$, without using any pins at columns outside the
interval $(i-n,j)$. 

We will prove the claim by induction on $n$, essentially by
applying the above in series many times. 

The base case $n=1$ is the previous theorem. If $n\geq2$, consider the
inductive construction of the probability distribution
$(\frac{p_1}{1-p_n},\ldots,\frac{p_{n-1}}{1-p_n})$ to within
$\frac{\varepsilon}{3}$. It uses finitely many pins, so they all lie
between columns $-n$ and $n$ for some $n$. So, first construct a
distribution within statistical distance $\frac{\varepsilon}{3}$ of
probabilities $1-p_n$ and $p_n$ separated by at least $2n$ columns. Then
apply that inductive construction to the output with probability
$1-p_n$, giving probabilities $p_1$, \ldots, $p_{n-1}$ in consecutive
columns, without interfering with the probability of $p_n$ more than $n$
columns away. Finally, apply Lemma~\ref{lem:one} to take the probability
$p_n$ to the column just right of $p_{n-1}$, again to within statistical
distance of $\frac{\varepsilon}{3}$, giving $(p_1,\ldots,p_n)$ to within
statistical distance $\varepsilon$, as desired. (If $n=2$, the last step
is the only relevant step.)
\end{proof}

%
%



%
%
%
%
%

%
%
%

%
%
%
%


\section{General Pachinkos}
\label{sec:general}

\subsection{Universality}

For general Pachinkos, any ordered set of dyadic probabilities may be constructed.

\begin{theorem}
Every finite ordered set of dyadic probabilities are exactly the ordered outputs of some Pachinko.
\end{theorem}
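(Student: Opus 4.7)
Given dyadic probabilities $p_1,\ldots,p_n$ with $\sum p_i = 1$, first choose a common denominator $2^k$ (i.e., $k$ large enough that every $p_i$ is an integer multiple of $1/2^k$), and write $p_i = a_i/2^k$ with nonnegative integers $a_i$ satisfying $\sum a_i = 2^k$. The plan is to build, in the Grid Pachinko model, a ``splitter and routing'' device, which then translates to a general Pachinko via the simulation described in Section~\ref{sec:grid}.

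The splitter is a complete binary tree of N-pins of depth $k$, built in the standard balanced triangular arrangement (like a Full $k$-Pachinko): a ball entering the top exits at one of $2^k$ leaf positions with probability exactly $1/2^k$ each. Label the leaves $\ell_1,\ldots,\ell_{2^k}$ in left-to-right order, and partition them into contiguous blocks $G_1,\ldots,G_n$ with $|G_i|=a_i$. The routing stage takes each leaf $\ell_j$ belonging to block $G_i$ and delivers its ball to a single drop column $x_i$, where $x_1 < x_2 < \cdots < x_n$ are the intended output columns laid out in order below and to the sides of the splitter. Routing is done with monotone staircases of L-pins and R-pins: for each leaf $\ell_j \in G_i$, drop the ball straight down to a dedicated ``routing row'' assigned to~$\ell_j$, then install a maximal run of L-pins (if $x_i$ lies to the left of $\ell_j$) or R-pins (if to the right) along that row to shift the ball one column per pin until it reaches column $x_i$, at which point it falls freely down the $x_i$ column to the drop site. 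Give each leaf its own horizontal routing row so no two routing paths share a pin site.

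Correctness then follows immediately: the probability reaching column $x_i$ is the sum over $\ell_j \in G_i$ of the per-leaf probability $1/2^k$, giving $a_i/2^k = p_i$; and the outputs appear left-to-right in the required order because $G_1,\ldots,G_n$ are contiguous blocks of leaves assigned to $x_1 < \cdots < x_n$. Finally, the simulation from Section~\ref{sec:grid} turns this Grid Pachinko with N-, L-, R-pins into a general Pachinko producing the same output distribution.

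The main obstacle, and the only place needing care, is ensuring that the routing staircases do not interact badly with each other or with the splitter tree. The key observation that makes this easy is monotonicity: since both the leaves and the output columns are scanned in increasing $x$-coordinate, the assignment $\ell_j \mapsto x_i$ is order-preserving, so the routing paths can be realized as a family of noncrossing staircases. Concretely, I would place the $n$ output columns sufficiently spread out (say at $x$-coordinates strictly outside the horizontal footprint of the splitter, or interleaved with a large enough column spacing) and give each leaf its own distinct routing row, so that on each row only one ball is ever being shifted horizontally. This lets each L-/R-pin act on a single ball path, avoids shared pins across paths, and prevents any S-pin or rest-site behavior from arising. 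These are routine layout details rather than substantive arguments, so the proof reduces to describing the splitter + monotone staircase routing above and invoking the grid-to-general simulation.
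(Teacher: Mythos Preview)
Your overall approach matches the paper's: build a depth-$k$ binary-tree splitter to produce $2^k$ streams of probability $1/2^k$ each, then use L- and R-pins to route groups of streams to the desired output columns, and finally invoke the Grid-to-general simulation.

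There is one genuine error in the splitter step. A Full $k$-Pachinko is \emph{not} a complete binary tree and does \emph{not} produce $2^k$ uniform outputs: it has only $k+1$ output columns, with probabilities $\binom{k}{j}/2^k$ (the binomial distribution), as stated earlier in the paper. In the triangular arrangement every interior pin receives balls from \emph{two} parents, so paths merge and the underlying Pachinko graph is a DAG, not a tree. To obtain $2^k$ separate leaves each carrying probability $1/2^k$, the N-pin junctions must be spread apart so that no two streams ever land on the same pin; the paper accomplishes exactly this by inserting L- and R-pins ``to move the junctions apart as necessary'' after each split. With that correction your routing argument (contiguous blocks, monotone order-preserving assignment, one routing row per leaf) goes through as written.
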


\begin{proof}
We can trivially construct a Grid Pachinko that outputs any finite ordered set of probabilities. 
Construct a complete balanced binary tree with each leaf having probability equal to the 
lowest bit in the input set. This tree can be constructed by using N-Pins at junctions and
L-Pins and R-Pins to move the junctions apart as necessary. Then use L-Pins and R-Pins
to combine the leaf output bits together to form the required outputs. Since Grid Pachinkos
can be simulated by General Pachinkos, the claim holds. Of course this construction requires 
the use of an exponential number of pins.
\end{proof}

Thus, the question arises, how many pins does a Pachinko need to output a given dyadic probability distribution? Let $(p_1,p_2,...p_n)$ be the target dyadic probability distribution with each $p_i$ the probability of the ball terminating in column $i$.
Let $2^k$ be the largest denominator in any $p_i$, so each probability may be represented using $k$ bits. We show that the constructing an arbitrary dyadic distribution requires at least $\Omega(nk)$ pins, and we give an algorithm that constructs the target distribution using $O(nk^2)$ pins.

\subsection{Lower Bound on Pins}

\begin{theorem}
For any $n$ and $k$ there exists an (unordered) set of $n$
$k$-bit dyadic output probabilities requiring $\Omega(nk - n \log n)$ Pachinko
pins.
\end{theorem}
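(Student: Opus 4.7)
My strategy is a standard information-theoretic counting argument: lower bound the number of distinct multisets of $n$ positive $k$-bit dyadic probabilities summing to $1$, upper bound the number of such multisets realizable by a $p$-pin Pachinko, then compare by pigeonhole.

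First I would count targets. Writing each probability with common denominator $2^k$, an unordered multiset of $n$ positive $k$-bit dyadic probabilities summing to $1$ corresponds to a multiset of $n$ positive integers summing to $2^k$. The number of ordered compositions is $\binom{2^k-1}{n-1}$, so the number of multisets is at least $\binom{2^k-1}{n-1}/n!$. Applying the crude estimates $\log_2\binom{a}{b}\ge b\log_2(a/b)$ and $\log_2(n!)\le n\log_2 n$ shows this count is at least $2^{\Omega(nk-n\log n)}$ whenever the claimed bound is nontrivial (roughly $k\gtrsim\log n$; for smaller $k$, $\Omega(nk-n\log n)$ is vacuous).

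Next I would upper bound the number of output multisets realizable with $p$ pins. By the preceding probability-computation theorem, the output distribution is determined by the Pachinko graph together with a constant-bit label per arc encoding which of the three probability-transfer cases applies at its tail. By the earlier lemma of Section~2 such a graph is planar, acyclic, has out-degree at most two, has at most $3p$ vertices ($p$ pins plus at most $2p$ outputs), and hence $O(p)$ arcs. Unlabeled planar graphs with $V$ vertices and bounded degree admit canonical succinct encodings of size $O(V)$ bits (for instance encoding the combinatorial planar map), so there are only $2^{O(p)}$ isomorphism classes of such graphs, and the $O(1)$ labels per arc contribute at most another $2^{O(p)}$ factor. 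Since isomorphic Pachinko graphs output identical multisets, only $2^{O(p)}$ multisets are realizable with at most $p$ pins. Pigeonhole against the target count forces some target multiset to require $p=\Omega(nk-n\log n)$ pins.

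The main obstacle is the planar-graph count: a naive bound treating the Pachinko graph as an arbitrary labeled DAG of bounded out-degree gives only $2^{O(p\log p)}$ configurations, weakening the conclusion by a $\Theta(\log p)$ factor. Getting the linear exponent requires both the planarity provided by the Section~2 lemma and the observation that the output multiset is an isomorphism invariant, so that counting unlabeled planar graphs is enough. The remaining steps---verifying that three per-arc types fully specify every arc probability and carrying out the routine binomial/factorial estimates---are straightforward bookkeeping.
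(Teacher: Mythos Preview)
Your proposal is correct and follows essentially the same two-part counting argument as the paper: lower-bound the number of target distributions, upper-bound the number realizable by $p$ pins via the $2^{O(p)}$ count of unlabeled planar (di)graphs, and apply pigeonhole. The only cosmetic differences are that the paper counts targets by choosing the first $n-1$ probabilities in $[0,1/n]$ rather than via compositions, and cites Bonichon et~al.\ directly for the planar graph count rather than invoking a succinct-encoding argument; your version is arguably slightly more careful in explicitly noting that a constant-bit per-arc label (for the three probability-transfer cases) is needed on top of the digraph structure to determine the output multiset.
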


Note that since each probability is at least $\frac{1}{2^k}$, there are
at most $2^k$ nonzero probabilities, so for reasonable parameter choices
the first term dominates and gives a lower bound of $\Omega(nk)$ pins.

\begin{proof}
The number of ordered such probability distributions is
$\Omega((\frac{2^k}{n})^n)$, since there are that many probability
distributions in which the first $n-1$ probabilities are chosen
independently between $0$ and $\frac1n$.

So the number of ordered probability distributions is $\Omega(2^{nk - n
  \log n})$, and the number of unordered probability distributions is
also $\Omega(2^{nk-n\log n})$, since each unordered probability
distribution corresponds to at most $n! = O(2^{n \log n})$ ordered
ones. 

A Pachinko arrangement is completely specified by the planar digraph
describing which pins' outputs are which pins' inputs. Bonichon et al.~\cite{BGHPS}
showed that the number of (unlabeled) planar graphs on $t$ vertices is $O(2^t)$. A
planar digraph consists of a planar graph and an orientation for each of
its $O(t)$ edges, so the number of planar digraphs on $t$ vertices is
also $O(2^t)$. So at least $nk - n \log n$ pins are necessary to even
construct enough planar digraphs to generate all the probability
distributions.
\end{proof}

\subsection{Upper Bound on Pins}
In this section, we prove an upper bound of $O(nk^2)$ on the same
problem as a lower bound was given for in the previous section.

We define a \textit{pure stream} to be the Pachinko region for which the
probability that the center of the ball passes through that region is $1/2^m$ for any integer $m$.
An \textit{impure stream} is a region that has
$\sum_{i=1}^{m}(1/2^{a_i})$ of  probability that the center of the ball
passes through that region with $(a_1,\ldots,a_m)$ being a strictly increasing integer sequence.

\begin{lemma}
\label{lem:crossover}
A pure stream $1/2^m$ and a impure stream $\sum_{i=1}^{j}(1/2^{a_i})$,
with $m<a_1$ and separated by 3 empty columns, can switch columns using $O(a_j-m)$ pins.
\end{lemma}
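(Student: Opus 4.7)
The plan is to leave the impure stream untouched in its column and transfer only the excess probability $R := 1/2^m - \sum_{i=1}^{j} 1/2^{a_i}$ from the pure column to the impure column. After this transfer the pure column holds $\sum_i 1/2^{a_i}$ and the impure column holds $1/2^m$, which is the required swap, because downstream Pachinko behavior depends only on the total probability entering each column and not on which of the two original streams produced that total. First note that $R > 0$: since $m < a_1 < \cdots < a_j$, we have $\sum_i 1/2^{a_i} \le \sum_{k=m+1}^{a_j} 1/2^k = 1/2^m - 1/2^{a_j}$.

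The central construction is a cascade of $a_j - m$ N-pins that lives inside the three empty columns. Call the pure column $0$ and the impure column $4$. I would place each N-pin at either column $1$ or column $3$: a split at column $1$ peels its left output directly into the pure column, while a split at column $3$ peels its right output directly into the impure column. The other output of either split lands at column $2$, and a single L-pin or R-pin at column $2$ steers it into the column of the next N-pin. The $k$-th N-pin splits a $1/2^{m+k-1}$ stream into two $1/2^{m+k}$ pieces, so after $a_j - m$ splits the cascade produces $a_j - m + 1$ sub-streams: one of each size $1/2^{m+1}, 1/2^{m+2}, \ldots, 1/2^{a_j}$, plus one extra $1/2^{a_j}$ surviving the last split. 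A short staircase on top reroutes the pure stream from column $0$ into the cascade's starting column, and an $O(1)$-pin staircase at the bottom routes the final surviving sub-stream into column $0$ or column $4$.

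To swap the column totals, for each $i$ I would pick exactly one split whose peel has size $1/2^{a_i}$ and direct that peel toward the pure column; every other peel is directed toward the impure column. This matching is always possible because the cascade produces two $1/2^{a_j}$ sub-streams and one sub-stream of each size $1/2^k$ for $k \in \{m+1, \ldots, a_j - 1\}$. The $j$ pure-column peels then sum to $\sum_i 1/2^{a_i}$ and the $a_j - m + 1 - j$ impure-column peels sum to $R$, as desired. Each cascade step costs one N-pin plus at most one transition pin, plus $O(1)$ pins on each end, for $O(a_j - m)$ pins total.

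The hard part will be the planar bookkeeping inside the narrow three-column workspace: each pin must land at a distinct triangular-grid vertex, and the peels and transitions must not collide with subsequent cascade pins. I plan to resolve this by placing N-pins in strictly increasing rows (rows $2, 4, 6, \ldots$) with transition pins in the intervening odd rows, so that every pin occupies its own vertex; since no later cascade pin lies in column $0$ or column $4$, every peeled sub-stream falls vertically into its destination column without obstruction.
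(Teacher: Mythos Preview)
Your proposal is correct and follows essentially the same approach as the paper's proof: move the pure stream into the three-column workspace, cascade $a_j-m$ N-pin splits to peel off pieces of sizes $1/2^{m+1},\ldots,1/2^{a_j},1/2^{a_j}$, and route each piece to the left or right boundary column according to whether its exponent appears among the $a_i$. The only difference is cosmetic pin placement---the paper keeps every N-pin at the center column and uses two directional pins per level (quoting $2+3(a_j-m)$ pins total), whereas you alternate the N-pin between columns $1$ and $3$ with a single transition pin per level---but the decomposition and routing logic are identical.
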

\begin{proof}
We provide a constructive example in Figure \ref{fig:newCross}.
Without loss of generality, consider that the pure stream is positioned on the left.
First, move the pure stream to the center (two columns to the right) using two R-pins.
Then, use an N-pin to split the pure stream into two streams of value $1/2^{m+1}$. 
Now, we check if $1/2^{m+1}\in (a_1,\ldots,a_j)$. 
If that is the case, we add two R-pins: one to direct one of the pure streams to the center and the other to merge it with the impure stream to the right.
Otherwise, use two L-pins at the same position.
Repeating the splitting process until the pure stream at the center is split into two streams of value $1/2^{a_j}$ ($a_j-m-1$ times). 
Now, use an L-pin and an R-pin as shown in Figure \ref{fig:newCross}.

We call this construction a \textit{crossover gadget}.
With this gadget, we can construct an impure stream of value $\sum_{i=1}^{j}(1/2^{a_i})$ in the left column, and a pure stream of value $1/2^m$ in the right column.
The construction uses $2+3(a_j-m)$ pins.
\end{proof}

\begin{figure}
\centering
	\includegraphics[width=6in]{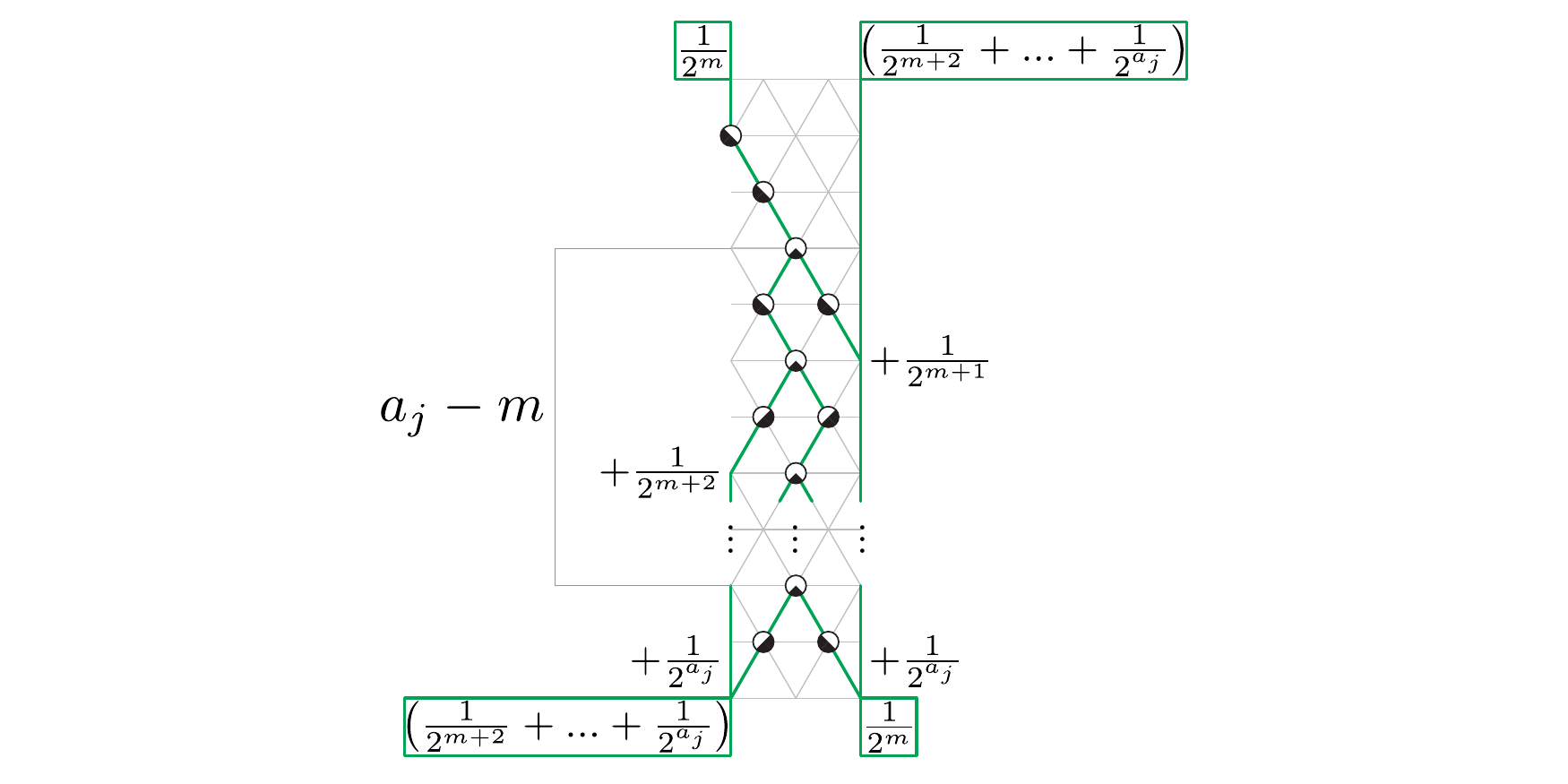}
    \caption{Crossover gadget.}
    \label{fig:newCross}
\end{figure}

\begin{lemma}
\label{lem:dyadicProbability}
Any dyadic probability, $\sum_{i=1}^{j}(1/2^{a_i})$ with $(a_1,\ldots,a_j)$ strictly increasing, can be constructed with $O(k^2)$ pins from a set of $O(k)$ pure streams, with ordered probabilities from higher to lower and with $O(1)$ distance between neighboring streams, if, for every $a_i$, there is a pure stream with probability $1/2^{a_i}$.
\end{lemma}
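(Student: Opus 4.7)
The plan is to build up the desired probability by merging the selected pure streams one at a time, processing them from right to left, i.e., from smallest to largest probability. At each stage I carry a single \emph{impure stream} holding the partial sum accumulated so far, and I route it leftward through the array of pure streams: for every unselected pure stream it encounters I apply the crossover gadget of Lemma~\ref{lem:crossover}, and for every selected pure stream I use a single forcing pin to merge them.

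Formally, I maintain the invariant that after absorbing $1/2^{a_i},\ldots,1/2^{a_j}$ the impure stream equals $\sum_{s=i}^{j}1/2^{a_s}$ and occupies the column originally holding $1/2^{a_i}$, with every pure stream to its left undisturbed. To extend the invariant from $i$ to $i-1$, I apply the crossover gadget once for each unselected pure stream $1/2^m$ with $a_{i-1}<m<a_i$; the hypothesis $m<a_i$ demanded by Lemma~\ref{lem:crossover} is automatic since $a_i$ is the smallest exponent currently in the impure stream. When the impure stream reaches the column adjacent to $1/2^{a_{i-1}}$, a single L-pin or R-pin placed just below the preceding gadget redirects it into that column, producing the new impure stream whose smallest exponent is now $a_{i-1}$.

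For the pin count, each crossover uses $O(a_j-m)=O(k)$ pins by Lemma~\ref{lem:crossover}. Every pure stream is crossed at most once in total: once an unselected stream ends up on the right of the impure stream it is never revisited, and each selected stream is consumed by a merge. With $O(k)$ pure streams in the arrangement there are at most $O(k)$ crossovers, contributing $O(k^2)$ pins, and the $j-1=O(k)$ merges contribute only $O(k)$ more. The overall pin count is $O(k^2)$, as required.

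The step I expect to be the main obstacle is verifying that the gadgets fit together geometrically without disturbing the pure streams that should pass through untouched. Since the crossover gadget of Lemma~\ref{lem:crossover} operates entirely inside the two columns being swapped together with the $O(1)$ empty spacing guaranteed between neighboring streams, any stream not involved in the current crossover falls straight down through its own empty column during the gadget's vertical extent. Stacking all the gadgets vertically lengthens the construction by $O(k^2)$ rows, but because pure streams traverse empty columns without interaction and the 50-50 dynamics are invariant under vertical stretching of an empty column, the other streams reach the bottom with their probabilities unchanged, so the construction composes end-to-end.
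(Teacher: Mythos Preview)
Your approach is essentially the same as the paper's: start with $1/2^{a_j}$ as a working impure stream, sweep it leftward across the ordered array of pure streams, applying the crossover gadget of Lemma~\ref{lem:crossover} at each unselected stream and merging at each selected one, for $O(k)$ crossovers of $O(k)$ pins each. The one step the paper makes explicit that you gloss over is a preliminary spacing pass: the hypothesis only guarantees $O(1)$ spacing, whereas Lemma~\ref{lem:crossover} needs exactly three empty columns, so the paper first spends $O(k^2)$ L/R-pins shifting the $O(k)$ streams to standardize the gaps; you should mention this (and note that a merge across a three-column gap costs $O(1)$ pins, not literally one), but it does not affect the bound or the argument.
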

\begin{proof}
Begin by adjusting the distances between the pure streams, using R-pins and L-pins so that neighboring streams have 3 empty columns between them.
This takes $O(k^2)$ pins, since each pure stream can move $O(k)$ units.
We define the stream of probability $1/2^{a_j}$ as the \textit{working stream}. 
We will bring it to the left using L-pins.
If the pure stream immediately to the left of it has probability $1/2^{a_i}$ for some $i$ and the next pure stream to the left has a greater probability, merge the $1/2^{a_i}$ stream with the working stream. 
Otherwise, use a crossover gadget to bring the working stream to the left.
When there are no more pure streams to the left, the working stream will have probability $\sum_{i=1}^{j}(1/2^{a_i})$.
Each crossover costs $O(k)$, by Lemma \ref{lem:crossover}, and the number of crossovers needed is $O(k)$.
Therefore, the number of required pins is $O(k^2)$.
\end{proof}

\begin{theorem}
\label{dyadicDistribution}
Any dyadic probability distribution $(p_1, p_2, \ldots, p_n)$ can be
obtained with $O(nk^2)$ pins, in which $k$ is the maximum number of bits
used to represent a probability in the desired distribution.
\end{theorem}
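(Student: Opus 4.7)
The plan is to build the Pachinko as a left-to-right cascade of $n$ ``blocks,'' one per output column, reusing Lemma~\ref{lem:dyadicProbability} as a black box inside each block.  The invariant carried across blocks is that after processing outputs $1,\ldots,i-1$, a collection of pure streams with total probability $R_{i+1} = \sum_{j>i} p_j$ is aligned to the right of column~$i$, ordered by decreasing probability with $O(1)$ horizontal spacing, ready to be fed into the next block.

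\textbf{Setup.}  Write each $p_i = \sum_{j\in S_i} 1/2^j$ in binary, with $|S_i|\le k$.  Start by using $k$ N-pins along a downward diagonal from the input to chop the probability-$1$ stream into pure streams of sizes $1/2,1/4,\ldots,1/2^{k-1},1/2^{k},1/2^{k}$ (the standard $k$-split spine), using $O(k)$ pins.  These are the initial pure streams supplied to block~$1$.

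\textbf{Block~$i$.}  On entry we have an ordered list of pure streams totalling $R_i$, placed above and to the right of column $i$.  I would perform three local operations:  (i)~further split any available pure stream into two equal halves whenever $S_i$ asks for a bit finer than the streams currently present; since $|S_i|\le k$, at most $O(k)$ extra N-pins suffice; (ii)~invoke Lemma~\ref{lem:dyadicProbability} on the sub-collection of pure streams corresponding exactly to the bit positions of $S_i$, using $O(k^{2})$ pins, and route the resulting probability-$p_i$ stream down into column~$i$; (iii)~leave the remaining pure streams in place, reordered into decreasing order with $O(1)$ spacing via L-pins and R-pins, together with the Lemma~\ref{lem:crossover} crossover gadget whenever two streams need to swap columns.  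Each crossover costs $O(k)$ pins by Lemma~\ref{lem:crossover}, and $O(k)$ swaps suffice to restore order since the block handles at most $O(k)$ streams.  Thus block~$i$ uses $O(k^{2})$ pins and re-establishes the invariant for block~$i+1$.

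\textbf{Accounting and the main obstacle.}  Summed over the $n$ blocks, the construction uses $O(nk^{2})$ pins, matching the bound.  The main obstacle is the bookkeeping in step~(iii): after repeatedly splitting streams to match the $S_i$, one must show that the multiset of leftover pure streams stays of size $O(k)$, so that reordering them stays within the $O(k^{2})$ pin budget per block.  This follows because any time a bit of $p_i$ is \emph{smaller} than every available stream, the cascade of $O(k)$ halvings introduced in step~(i) creates pure streams at every intermediate scale, but all but $O(k)$ of them are immediately consumed by Lemma~\ref{lem:dyadicProbability}; the leftover pure streams at the end of the block correspond to the $0$-bits of $p_i$ within the active range of scales, which is at most $k$.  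Combined with the geometric realizability guaranteed by Lemmas~\ref{lem:crossover} and~\ref{lem:dyadicProbability} (both of which produce outputs in the ``ordered, constantly-spaced'' format the next block needs), this closes the induction and yields the claimed $O(nk^{2})$ pin bound.
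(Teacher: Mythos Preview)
Your approach is essentially the same as the paper's: an initial $k$-way binary split, then process the $p_i$'s left to right, re-splitting as needed so that every bit of $p_i$ is present as a pure stream, and invoking Lemma~\ref{lem:dyadicProbability} to collect those bits at cost $O(k^2)$ per output.

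The one place where your write-up is looser than the paper's is the ``main obstacle'' paragraph. Your claimed invariant that ``the leftover pure streams at the end of the block correspond to the $0$-bits of $p_i$ within the active range of scales'' is not correct as stated. For example, if the incoming streams are $\{1/4,1/4,1/16,1/16\}$ and $p_i=3/16$, you split one $1/4$ to get a $1/8$, consume $1/8$ and $1/16$, and are left with $\{1/4,1/8,1/16\}$; these are not the $0$-bits of $p_i$. More importantly, without specifying \emph{which} stream you split in step~(i), the count $m_i$ of leftover streams could in principle drift upward by $\Theta(k)$ per block with no compensating decrease. The paper closes this gap by always splitting the \emph{smallest} pure stream that is still larger than the missing bit; this guarantees that every newly created scale was previously absent, so at all times there are at most two copies of each of the $k$ possible values $1/2^1,\ldots,1/2^k$, hence at most $2k$ pure streams. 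With that refinement your argument goes through.
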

\begin{proof}
This is a proof by construction.
Begin by splitting the initial stream (of probability 1) into $k-1$ streams of probabilities $(1/2^i)_{i=1}^{k-1}$ ordered by decreasing order from left to right followed by two streams of probability $ 1/2^k $.
This can be done by successively splitting the rightmost stream with an N-pin, $k$ times.

Using Lemma \ref{lem:dyadicProbability}, we can obtain $p_1$ using $O(k^2)$ pins.
Notice that this construction decreases the number of pure streams and that Lemma \ref{lem:dyadicProbability} cannot be applied directly to obtain $p_2$, because a term $1/2^m$ of $p_2$ might not have a corresponding pure stream.
We then re-split the pure streams even further using the following algorithm.

Let $1/2^p$ be the smallest term of $p_2$ that does not have a corresponding pure stream, and let $1/2^q$ be the smallest pure stream such that $p>q$.
Divide the stream $1/2^q$ with N-pins in the same way as the initial stream was split, $p-q$ times, so that the resultant rightmost stream will have probability $1/2^p$.
This procedure maintains the order of probabilities.
Repeat until all terms of $p_2$ have corresponding streams.
This subroutine uses $O(k)$ pins for each iteration. Overall, $O(k^2)$ pins are used, because $p_2$ has $O(k)$ bits.
After this re-splitting phase, the number of pure streams cannot be greater than $2k$.
If it were, there would be more than 2 streams with the same probability, because the smallest probability that can be created is $1/2^k$. One of them would have been created in the re-splitting phase, which causes a contradiction since we only create probabilities that were previously inexistent.

The subroutine described above allows the construction of $p_2$ with $O(k^2)$ pins, by applying Lemma \ref{lem:dyadicProbability}.
All statements referring to $p_2$ can be generalized to $p_i$, $i>2$.
We can then construct all probabilities with a total of $O(nk^2)$ pins.
\end{proof}


\section{Conclusion}
In this paper we have analyzed a rich set of models of perfectly inelastic Pachinko
with a single ball as input. A natural open question is to extend these results for Pachinko
allowing multiple input balls. For Grid Pachinkos, we allow for pins that stop, move or evenly split the forward movement of a ball. Another natural generalization arises by allowing for $(\alpha,\beta)$-pins, pins where the ball will either move to the right or left with probability $\alpha\beta$ or $(1-\alpha)\beta$ respectively and continue moving down or remain at the pin with probability $1-\beta$. All our Grid Pachinko pins are special cases of this model. Modeling elastic, non-local behavior, where balls may move up and/or more to the left or right, is perhaps the more natural, yet more complex system to analyze. Additionally, our models consider Pachinko in a vertical plane. 
Modeling three-dimensional balls interacting with point or line obstacles in the
presence of gravity may also lead to interesting future work. 
\section{Acknowledgements}
The authors would like to thank Zachary Abel, Greg Aloupis, Nadia Benbernou, Scott Kominers, and Anika Rounds for helpful discussions. 

\bibliography{pachinko}
\bibliographystyle{plain}

\end{document}